\documentclass{article}

\usepackage{microtype}
\usepackage{graphicx}
\usepackage{enumitem}
\usepackage{subcaption}
\usepackage{booktabs} 

\usepackage{hyperref}

 \usepackage[preprint]{icml2026}

\usepackage{amsmath}
\usepackage{amssymb}
\usepackage{mathtools}
\usepackage{amsthm}

\usepackage[capitalize,noabbrev]{cleveref}

\theoremstyle{plain}
\newtheorem{theorem}{Theorem}[section]

\theoremstyle{definition}
\newtheorem{definition}[theorem]{Definition}

\theoremstyle{remark}

\usepackage[textsize=tiny]{todonotes}

\icmltitlerunning{Bernstein--Vazirani Networks: Quantum Machine Learning by Interference}

\usepackage{wrapfig}
\usepackage{braket}
\usepackage{tikz}
\usepackage{quantikz}
\usepackage{tikz-cd}
\usetikzlibrary{shapes,positioning,calc, shapes.geometric}
\usepackage{bm}
\usepackage{cuted}

\newcommand{\R}{\mathbb{R}}

\newcommand{\Z}{\mathbb{Z}}
\newcommand{\C}{\mathbb{C}}

\newcommand{\mcal}[1]{\mathcal{#1}}

\newcommand{\Norm}[1]{\|#1\|}

\usepackage{booktabs}    
\usepackage{makecell}    
\usepackage{array}       
\usepackage{multicol,multirow}

\definecolor{quantumrow}{RGB}{230,240,255} 

\crefname{appendix}{Appendix}{Appendices}
\Crefname{appendix}{Appendix}{Appendices}

\begin{document}

\twocolumn[
\begin{center}
\icmltitle{Bernstein--Vazirani Networks: \\
Quantum Machine Learning by Interference}

\vspace{0.0em}

{\large
    Natacha Kuete Meli$^{1}$ \quad
    Tolga Birdal$^{2}$ \quad
    Prayag Tiwari$^{3}$ \quad
    Vladislav Golyanik$^{4, \dagger}$ \quad
    Michael Moeller$^{1, \dagger}$
    \par}

\vspace{0.7em}

{\normalsize\itshape
\noindent
\begin{minipage}[t]{0.48\textwidth}
\centering
    $^{1}$Department of Vision and Graphics,\\
    University of Siegen

    \vspace{0.4em}

    $^{3}$School of Information Technology,\\
    Halmstad University
\end{minipage}
\hfill
\begin{minipage}[t]{0.48\textwidth}
\centering
    $^{2}$Department of Computing,\\
    Imperial College London

    \vspace{0.4em}

    $^{4}$4D and Quantum Vision, \\
    Max Planck Institute for Informatics
\end{minipage}
\par}

\vspace{1em}

\end{center}
]

\begin{abstract}
We introduce Bernstein--Vazirani Networks (BVNs), a non‑variational quantum machine learning framework that leverages quantum interference for supervised learning, demonstrated on vision and representation learning tasks.
In their standard form, BVNs follow the principle of quantum Fourier sampling: 
labelled data are placed in superposition and interfered in the Fourier basis to extract globally informative features. 
We then define generalised BVNs that enable interference in problem-adapted bases, yielding more expressive models under the same measurement budget as in the standard setting.
BVNs achieve universal function approximation through (over)complete interference bases, while training of BVNs is gradient-free. 
Experiments on synthetic and real‑world classification tasks, as well as implicit image representation, show strong generalisation capabilities and competitive performance with classical and quantum baselines\footnote{Project page: \mbox{\url{https://4dqv.mpi-inf.mpg.de/BVNs/}}}. 
\renewcommand{\thefootnote}{\textdagger}
\footnotetext{denotes equal advising}
\end{abstract}
\section{Introduction}
\label{sec:introduction}

Quantum Machine Learning (QML) \cite{biamonte2017quantum,cerezo2022challenges} aims to exploit the high-dimensional Hilbert spaces of quantum systems coupled with quantum effects to build expressively powerful learning models. 
Most existing QML approaches typically rely on Parametrised Quantum Circuits (PQCs) and face significant architectural and trainability challenges. 
A central challenge is the mismatch between classical neural networks’ reliance on nonlinear activations and the linear, unitary dynamics of closed quantum systems, where nonlinearity is recoverable only through Boolean encodings, as noted by Herman et al.~\cite{herman2023expressivity}. 
In addition, PQC optimisation is often hindered by barren plateaus, i.e., regions of the loss landscape where gradients vanish exponentially with system size \cite{mcClean2018barren}, and, even outside such regimes, by the complexity of certified gradient-evaluation methods (e.g., parameter-shift rules \cite{schuld2019evaluating}) combined with measurement noise, imposing substantial overhead in
the execution and sampling of quantum circuits. 
%
A recent study \cite{recio2025train} further confirms this crisis in PQCs, \emph{estimating to 38 years the runtime of a single epoch of gradient descent with the parameter-shift rule on MNIST with a $10^4$‑parameter PQC, requiring ${\sim}10^{15}$ measurement shots.} 
These recent insights expose a tangible gap between the theoretical expressivity of quantum models and their practical efficiency to emulate neural networks, motivating alternative QML paradigms. 

\begin{figure*}[htbp]
\vspace{-.2cm}
    \centering
	\begin{subfigure}[t]{0.3\textwidth}
	      		\begin{tikzpicture}[
	      		node distance=3.8cm,
	      		box/.style={
	      			draw, rounded corners, thick, align=center, inner sep=3pt, minimum width=5.2cm
	      		}]
	      		%
	      		\node[scale=.8, box, label=above: {\textbf{Standard BVN}}] (standard) {
	      			\begin{tabular}{c}
	      				$\displaystyle f(x)=\sum_{y} \widehat f(y)\,\chi_y(x)$\\
	      				$\displaystyle \chi_y(x)=(-1)^{y \odot x}$
	      			\end{tabular}
	      		};
	      		%
	      		\node[above=1.cm of standard.center]
	      		{\includegraphics[scale=.75]{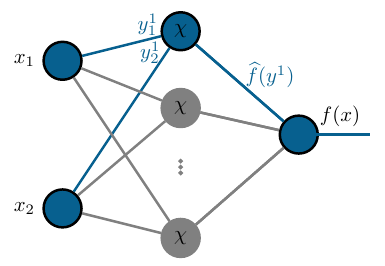}};
	      	\end{tikzpicture}
	      	\caption{Standard BVN}
	      	\label{fig:bv_model}
	\end{subfigure}
	\hspace{-.5cm}
	\begin{subfigure}[t]{0.3\textwidth}
			\begin{tikzpicture}[
			node distance=3.8cm,
			box/.style={
				draw, rounded corners, thick, align=center, inner sep=3pt, minimum width=5.2cm
			}]
			\node[scale=.8, box, label=above: {\textbf{Generalised BVN}}] (generalised) {
				\begin{tabular}{c}
					$\displaystyle f(x)\approx\sum_{y,z,t} \widehat f(yzt)\,\chi_{yzt}(x)$\\
					$\displaystyle \chi_{yzt}(x)=\text{more expressive feature}$
				\end{tabular}
			};
			%
			\node[above=1.cm of generalised.center]
			{\includegraphics[scale=.72]{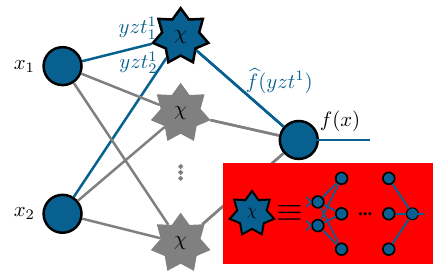}};
		\end{tikzpicture}
		\caption{ Generalised BVN}
		\label{fig:iqnn}
	\end{subfigure}
	\begin{subfigure}[t]{0.41\textwidth}
	 \includegraphics[width=1.\linewidth, trim={0cm .01cm 0cm -1.cm}, clip]{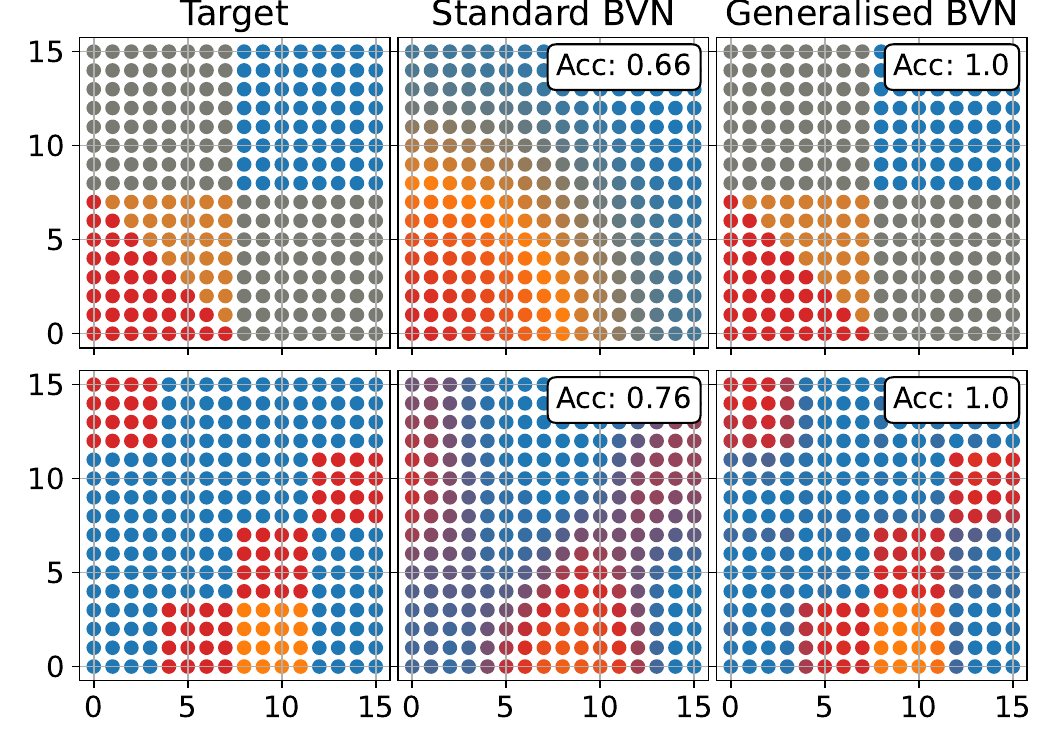}
     \vspace{-.5cm}
	 \caption{2D classification with BVNs}
	 \label{fig:2d_expressive_repr}
	\end{subfigure}

    \caption{
    \textbf{The proposed standard (\ref{fig:bv_model}) and generalised (\ref{fig:iqnn}) Bernstein--Vazirani Networks (BVNs).} 
    In the standard setting, binary inputs $x_i$ are linearly combined in a hidden layer with optimised binary weights $y^j$, passed through a parity activation $\chi$, and linearly combined at the output layer using Fourier coefficients $\widehat f(y^j)$.
    In the generalised setting, inputs pass through expressive optimised sub-models $\chi$ (cartooned in the red box) before the output combination, enabling richer representations.
     (\ref{fig:2d_expressive_repr}): Illustratively, we classify 2D points where the target function (left) is a sum of expressive sub-models $\chi$ of the generalised basis. 
    The generalised model fits the boundaries exactly, while the standard model has not yet converged. 
    ``Acc.'' denotes accuracy. 
    See Secs.~\ref{sec:method} and \ref{sec:exterimental_results} for more details. 
    \vspace{-2mm}
    }
    \label{fig:bv_and_iqnn_models}
\end{figure*}

From this perspective, it is instructive to revisit quantum algorithms that originally demonstrated provable advantages compared to classical counterparts, such as Shor \cite{shor1999polynomial}, Deutsch--Jozsa \cite{deutsch1992rapid} or Bernstein--Vazirani \cite{bernstein1993quantum}, which use quantum Fourier sampling to extract hidden structure via \emph{interference}. 
Wakeham and Schuld \cite{wakeham2024inference} identify a common blueprint underlying such algorithms that can be repurposed for inference tasks: \emph{place labelled information in superposition via an oracle, interfere in the Fourier space, and measure.} 
In this paradigm, learning is achieved via interference by sampling globally informative Fourier characters rather than via gradient-based optimisation, avoiding issues such as local minima and barren plateaus. 
This reframes learning alternatively as a problem of frequency discovery rather than parameter optimisation. 
Wakeham and Schuld, however, left unresolved the ``leakage'' problem that occurs when no single frequency fits the data perfectly, preventing them from validating the idea practically. 
Building on this blueprint, we introduce \emph{Bernstein--Vazirani Networks (BVNs)}, which \emph{repurpose} the Bernstein--Vazirani (BV) algorithm \cite{bernstein1993quantum} for learning tasks. 
The BV algorithm interferes labelled training examples placed in superposition to reveal a set of basis functions that explain the training data. 
We further generalise the standard BV framework---defining \emph{generalised BVNs}---by allowing interference in alternative problem-adapted bases, enabling to sample more diverse and expressive features while reducing the required measurement budget. 
Importantly, quantum circuits in BVNs require no variational parameter optimisation with respect to a loss function and rely solely on labelled data to infer the underlying target function that explains the observations. 
The expressivity of our models stems from the exponentially large Hilbert space and the (over)completeness of the interference basis that enable universal function approximation.
Intuitively, generalised BVNs can be viewed as expressive parametrised quantum architectures, \textit{in which all network parameter configurations are placed in superposition.} 
Interference then highlights the configurations that best align with the structure induced by the labelled data. 
Rather than identifying a single optimal configuration, the algorithm samples multiple high-quality configurations, which are aggregated to approximate the target function, addressing the leakage problem. 
Each sampled configuration corresponds to a sub-network within a larger implicit ensemble, evaluated at the cost of a single sub-network instance and optimised through interference rather than gradient descent. 
An intuitive illustration of the proposed BVNs is shown in \Cref{fig:bv_and_iqnn_models} with a hand-crafted 2D point classification result, where the target function is a sum of basis functions unique to the generalised basis. 
After only 100 shots, the generalised BVN fits the decision boundaries precisely, whereas the not-yet-converged standard BVN smooths them. 

In summary, our key technical contributions are as follows:
\begin{itemize}[leftmargin=*,itemsep=0.1em,topsep=0.1em]
    \item \textbf{Bernstein--Vazirani Networks (BVNs)}, a new class of QML models that leverage quantum interference to discover dominant basis functions for efficient function approximation in machine learning tasks; 
    \item \textbf{Generalised BVNs} that extend the standard framework with more expressive basis functions, improving data fitting under the same measurement budget. 
    \item \textbf{Coefficient reconstruction via ridge regression} to address the leakage problem and recover the best approximation of the target function in the sampled basis. 
\end{itemize}

Empirically, we validate our BVNs on several synthetic and real-world classification tasks  (Iris \cite{fisher1936use} and Penguins \cite{horst2022palmer} datasets), as well as 2D image-fitting tasks. 
Classification results demonstrate the ability of BVNs to learn decision boundaries and perform on par with classical (MLP, SVM) and quantum models (Single-Qubit Classifier \cite{perez2020data}), while being far more sampling-efficient than PQC models. 
In line with this, our BVNs continue to perform accurately on image representation tasks, competing with classical models (MLP+RFF, SIREN \cite{benbarka2022seeing}) and outperforming the recent PQC models (QIREN \cite{zhao2024quantum} and QVF \cite{wang2025quantum}) in both accuracy and \mbox{sampling cost}. 

\section{Related Works}
\label{sec:related_works}

Quantum Machine Learning (QML) \cite{biamonte2017quantum,schuld2018supervised,cerezo2022challenges} is driven by the idea of combining the power of quantum computing with the success of neural networks.
We review key developments in the field and refer the reader to a recent survey for further details~\cite{meli2025quantum}.

\paragraph{Quantum Neural Network Designs.}
Schuld et al.~\cite{schuld2014quest}, by surveying early work on Quantum Neural Networks (QNNs), outlined three requirements for QNNs: 
(i) a fixed-length bitstring initial state; 
(ii) neural‑like connections and update rules; and 
(iii) quantum‑consistent evolution exploiting effects such as superposition, entanglement, and interference.
These requirements motivated several elementary QNN architectures, including quantum neurons with ancillary registers \cite{cao2017quantum} and various quantum perceptron models \cite{beer2020training,da2016quantum,kapoor2016quantum}.
While replacing classical components with quantum counterparts is a reasonable direction, a central challenge in QML has long been the tension between the nonlinear, dissipative dynamics of neural computation and the linear, unitary dynamics of quantum mechanics \cite{schuld2014quest}. 
Herman et al. \cite{herman2023expressivity} showed that one way to bypass this issue is to map inputs to the Boolean cube.
Yet, training their models still relies on a classical optimisation loop.

Recently, Wakeham and Schuld \cite{wakeham2024inference} explored an alternative approach to QML drawn from quantum algorithms such as Deutsch–Jozsa \cite{deutsch1992rapid} and Shor's algorithm \cite{shor1999polynomial}, rather than from classical neural networks.
They highlight a common blueprint shared by these algorithms: \emph{place labelled information in superposition via an oracle, interfere in Fourier space, and measure}.
The authors then leverage this \emph{interference strategy} for \emph{inference}, by regarding a learning task as a hidden-subgroup problem in which interference should reveal the so-called ``annihilator'' that generated the data, i.e., the true oracle that produced the training examples.
This perspective suggests a promising route toward genuinely quantum learning models by directly aligning inference with intrinsic quantum mechanisms.
Wakeham and Schuld \cite{wakeham2024inference} left open how to handle leakage, i.e., loss of information when no perfect annihilator exists.
We address this gap by drawing on the Bernstein--Vazirani algorithm \cite{bernstein1993quantum}, validating the approach on several learning tasks and showing how ``leaked information'' can be aggregated. 

\paragraph{Variational Quantum Neural Networks.}
Variational quantum neural networks implemented via Parametrised Quantum Circuits (PQCs) are the dominant paradigm in contemporary QML.
Schuld and Petruccione \cite{schuld2018supervised} formalised supervised learning with quantum computers, introducing the embedding of classical data into quantum states and optimising PQCs as trainable models.
Key developments include PQC architectures designed for specific expressivities and inductive biases.
Notably, data re-uploading classifiers showed that a single qubit with repeated data encoding suffices for binary classification \cite{perez2020data}, later formalised as implementing a Fourier series \cite{schuld2021effect}. 
Further work analysed how entanglement patterns affect expressivity \cite{sim2019expressibility} and introduced quantum convolutional neural networks inspired by classical ones \cite{cong2019quantum}.

A major challenge for PQCs is barren plateaus \cite{larocca2025barren}, where flat loss landscapes hinder optimisation. 
Only a few architectures, mostly group-equivariant models with built-in inductive biases \cite{schatzki2024theoretical,liu2025rotation,nguyen2024theory} or shallow-depth PQCs, admit theoretical training guarantees. 
Our approach diverges fundamentally from this paradigm by eliminating classical outer-loop optimisation and instead exploiting quantum interference to determine network configurations.

\paragraph{Applications.}
Representative supervised learning tasks for assessing QML models include classification and implicit representation. 
For classification, Salinas et al.~introduced the Single-Qubit Classifier (SQC) \cite{perez2020data}, which associates target classes with specific regions of the Bloch sphere and trains PQCs to cluster input states in the corresponding class representations. 
The SQC employs data re-uploading to approximate functions through their Fourier representations.
This Fourier-based perspective was later extended to Quantum Implicit Neural Representation (QIREN) by Zhao et al.~\cite{zhao2024quantum}, where PQCs learn a mapping from coordinates to signal values, such as those describing images or audio. 
More recently, Wang et al.~\cite{wang2025quantum} introduced Quantum Visual Fields (QVFs), a hybrid architecture in which a classical backbone extracts and normalises feature vectors that are subsequently processed by PQCs, improving the expressive capacity of the quantum model. 

\section{Motivation}
\label{sec:background}
Supervised learning \cite{nielsen2015neural,Goodfellow-et-al-2016} seeks to approximate an unknown function $f$ from training pairs $\{(x^i, y^i)\}_{i=1}^m$, where $f(x^i)=y^i$.
This is typically achieved by a parameterised model $\mathcal N$ such that $\mathcal N(x^i,\theta)\approx f(x^i)$ for some optimal parameter set $\theta$.
This network $\mathcal N$ is often organised into layers, with an output layer aggregating the results of previous layers. 
We can express this as
\begin{equation}
	\label{eq:qml_repr}
	\mathcal N(x) = \sum_{j=1}^k w_j \phi_j(x),
\end{equation}
where $\phi_j(x)$ denotes the output of the $j$-th neuron in the last hidden layer and $w_j$ are the corresponding output weights (both collectively parametrised by $\theta$). 
In essence, we are representing the target function $f$ in a learned basis formed by the functions $\{\phi_j\}_{j=1}^k$ produced before the output layer.

Typically, training requires designing an expressive architecture $\mathcal N$, performing multiple passes over the data, and optimising non-trivial loss functions.
The expressive architecture is constructed so that the hidden layers, or basis functions $\{\phi_j\}_{j=1}^k$, lift the data into a higher-dimensional space, which in practice increases the number of parameters.
For this reason, it is believed that QC, which can access high-dimensional Hilbert spaces, may offer advantages in ML, giving rise to QML \cite{biamonte2017quantum,schuld2014quest,cerezo2022challenges}.

An interesting question we would like to answer is how to design a learning paradigm that, given access to the training dataset, can infer within a high-dimensional basis $\{\phi_j\}$ a reasonably sized subset $\{\phi_j\}_{j\in S}$, $|S| \ll |\{\phi_j\}|$, that approximates $f$ via \Cref{eq:qml_repr} sufficiently well.

Our BVNs operate at the natural intersection of learning and quantum computing in the sense that, by exploiting superposition and interference, they sample indexed basis functions from an exponentially large basis with probability proportional to the amplitudes of their weights in the representation of the target function $f$ in that basis.
These properties, together with the fact that BVNs (\Cref{sec:method}) avoid several fundamental limitations of gradient-based PQC training, make them a highly appealing learning paradigm.

\section{(Generalised) Bernstein--Vazirani Networks}
\label{sec:method}

We introduce our interference-based models, beginning with a warm-up on the Bernstein--Vazirani algorithm.
Key symbols and notations used in this work are listed in \Cref{tab:symbols}.

\begin{table}[t]
	\centering
	\caption{List of symbols used in this paper.}
\resizebox{1\linewidth}{!}{%
	\begin{tabular}{c|l}
		\toprule
		$\Z_2^n$  & $n$-dimensional Boolean space, $ \Z_2^n = \{0, 1\}^n$ \\
		$\odot$, $\oplus$  &  inner-product and sum $\bmod 2$, respectively \\
		$\tilde f$, $f$  &  standard and generalised target function \\
		$\hat f (\cdot)$  &  Fourier coefficient of $f$ \\
		$H$  &  Hadamard transform \\
		$I$  &  interference operator \\
		$ U_f$  &  oracle unitary for function $f$\\
		$\eta$  &  expressive representation \\
		$\xi$, $\chi$  &  standard and generalised basis function \\
		$\bra{}$, $\ket{}$  &  bra and ket, complex row and column vectors\\
		\bottomrule
	\end{tabular}
    }
	\vspace{-5mm}
	\label{tab:symbols}
\end{table}

\subsection{Warm-up: The Bernstein--Vazirani Algorithm}
\label{sec:warm_up}
The Bernstein--Vazirani (BV) algorithm \cite{bernstein1993quantum} is a celebrated example of quantum advantage. 
It solves the following problem:
Given a Boolean function $\tilde f : \Z_2^n \to \{0,1\}$ that is \emph{promised} to satisfy $\tilde f(x)=s\odot x = s^\top x \bmod 2$ for an \emph{unknown} $s\in \Z_2^n$, the goal is to recover $s$ from observed pairs $(x,\tilde f(x))$.

The algorithm, illustrated in \cref{fig:bv}, has three steps:
(i) starting from $\ket{0}^{\otimes n}$, we apply $H^{\otimes n}$ to create a uniform superposition over all basis states, i.e., inputs;
(ii) we query the oracle $U_{\tilde f}$, which applies the phase $(-1)^{\tilde f(x)}$ to each basis state in the superposition;
(iii) we apply $H^{\otimes n}$ again to interfere with the phases.
The corresponding states are
\begin{align}
	\label{eq:bv1}
	\ket{\psi_1} &= \frac{1}{\sqrt{2^n}} \hspace{-.1cm}\sum_{x\in \Z_2^n} \ket{x},\\
	\label{eq:bv2}
	\ket{\psi_2} &= \frac{1}{\sqrt{2^n}} \hspace{-.1cm}\sum_{x\in \Z_2^n} (-1)^{\tilde f(x)} \ket{x},\\
	\label{eq:bv3}
	\ket{\psi_3}&=\sum_{y\in \Z_2^n}\hspace{-.1cm} \widehat f(y)\ket{y},
	\quad
	\widehat f(y)\hspace{-.1cm}
	= \hspace{-.1cm}\frac{1}{2^n}\hspace{-.1cm} \sum_{x\in \Z_2^n} \hspace{-.1cm} (-1)^{\tilde f(x)+y\odot x}.
\end{align}
Substituting the promise $\tilde f(x)=s\odot x$ into \cref{eq:bv3} gives
$\widehat f(y)=\frac{1}{2^n}\sum_{x\in \Z_2^n} (-1)^{(s\oplus y)\odot x}$, which equals $1$ if and only if $y=s$, hence $0$ else (see \cref{app:bv}).
In essence, thanks to interference, a single measurement of the system reveals $s$ out of $2^n$ solutions with probability $1$.

\begin{figure}[t]
	\centering
	\vspace{-.3cm}
	\includegraphics[trim={0cm .2cm 0cm 0cm}]{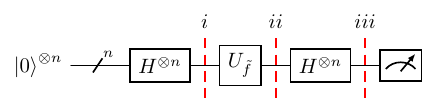}
	\caption{The Bernstein--Vazirani algorithm \cite{bernstein1993quantum}.  
		The algorithm mainly consists of three steps:  
		(i) initialise the system in a perfect superposition;  
		(ii) call the function oracle;  
		(iii) apply the Hadamard operator for interference.  
		If $\tilde f$ satisfies the promise $\tilde f(x) = s \odot x$, measuring reveals $s$ with certainty.\vspace{-3mm}
	}
	\label{fig:bv}
\end{figure}

\paragraph{Quantum Fourier Sampling.} 
The BV algorithm \cite{bernstein1993quantum} generalises to Fourier sampling over $\Z_2^n$:
If the promise $\tilde f(x)=s\odot x$ holds, we measure $s$ with probability $1$.
If not, consider the function space $\mathcal{F}:=\{f:\Z_2^n\to\R\}$ and the functions $\chi_y(x)=(-1)^{y\odot x}$, parameterised by $y\in\Z_2^n$, which encode the parity $y\odot x$. 
The set $\{\chi_y\}_{y\in\Z_2^n}$ forms an orthonormal basis of $\mathcal{F}$ with respect to the inner product $\braket{f,g}=\mathbb{E}_{x\in\Z_2^n}[f(x)g(x)]$, and we refer to $y\in\Z_2^n$ as a basis vector.
It follows that
\begin{equation}
	\widehat f(y) = \frac{1}{2^n} \sum_{x\in \Z_2^n} f(x)\chi_y(x)
\end{equation}
are precisely the Fourier coefficients of $f$ in this basis $\chi$.
\begin{theorem}[Fourier expansion on $\Z_2^n$, \protect{\cite{o2014analysis}}]
	\label{theo:fourier_bool}
	Any function $f: \mathbb{Z}_2^n \to \R$ admits a
	unique expansion as
	\begin{equation}
		\label{eq:fourier_boolean}
		f(x) = \sum_{y \in \Z_2^n} \widehat{f}(y) \, \chi_y(x).
	\end{equation}
	This is called the Fourier expansion of $f$, and the coefficients $\widehat{f}(y) \in \R$ are the Fourier coefficients of $f$ in the basis $\chi$. 
\end{theorem}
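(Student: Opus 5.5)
The plan is to show that $\{\chi_y\}_{y\in\Z_2^n}$ is an orthonormal basis of the $2^n$-dimensional real vector space $\mathcal F$ with inner product $\braket{f,g}=\mathbb{E}_{x\in\Z_2^n}[f(x)g(x)]$. Existence and uniqueness of the expansion then follow from linear algebra, and the coefficients are identified as $\widehat f(y)=\braket{f,\chi_y}$, which agrees with the formula stated just before the theorem.

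\textbf{Step 1: dimension count.} Note that $\mathcal F\cong\R^{2^n}$, with basis the indicator functions $\delta_a$, $a\in\Z_2^n$. Hence $\dim\mathcal F=2^n=|\{\chi_y\}|$.

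\textbf{Step 2: orthonormality (the core computation).} For $y,y'\in\Z_2^n$ we have $\chi_y(x)\chi_{y'}(x)=(-1)^{(y\oplus y')\odot x}=\chi_{y\oplus y'}(x)$. It therefore suffices to show that $\mathbb{E}_x[\chi_s(x)]=1$ if $s=0$ and $0$ otherwise; this is the same fact used for \cref{eq:bv3} in the BV warm-up. For $s=0$ the claim is immediate. For $s\neq 0$, pick an index $i$ with $s_i=1$. The involution $x\mapsto x\oplus e_i$ is a bijection of $\Z_2^n$ that flips the sign of $\chi_s(x)$, so the sum cancels in pairs. Alternatively, one can factor the sum as $\prod_{j}\sum_{x_j\in\{0,1\}}(-1)^{s_jx_j}$, and the factor with $s_j=1$ vanishes. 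This gives $\braket{\chi_y,\chi_{y'}}=[y=y']$.

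\textbf{Step 3: conclude.} Orthonormal vectors are linearly independent. So by Step 1 the $2^n$ characters form a basis of $\mathcal F$, and every $f$ has a unique expansion $f=\sum_y c_y\chi_y$ with $c_y\in\R$. Taking the inner product with $\chi_{y}$ and using Step 2 gives $c_{y}=\braket{f,\chi_{y}}=\frac{1}{2^n}\sum_x f(x)\chi_{y}(x)=\widehat f(y)$. These coefficients are real because $f$ and $\chi_y$ are real-valued.

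The only nontrivial step is the character-sum vanishing in Step 2, and the pairing argument settles it directly. The rest is bookkeeping.
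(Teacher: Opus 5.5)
Your proposal is correct and takes essentially the same route as the paper's proof in \Cref{app:proof_fourier}. Both establish orthonormality of the $2^n$ characters, match this against $\dim\mathcal{F}=2^n$ via $\mathcal{F}\cong\R^{2^n}$, and read off $\widehat f(y)=\braket{f,\chi_y}$; the paper verifies the character sum with the product factorisation from \Cref{app:bv}, which is your alternative to the pairing involution. Your remark that orthonormal vectors are linearly independent states explicitly a step the paper leaves implicit.
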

\begin{proof}
	\vspace{-.25cm}
	See \Cref{app:proof_fourier}.
\end{proof}

As implied by \Cref{theo:fourier_bool} and \Cref{eq:bv1,eq:bv2,eq:bv3}, running the BV algorithm for $f(x)=(-1)^{\tilde f(x)}$ or any other $f$ yields basis vectors $y \in \Z_2^n$ and corresponding functions $\chi_y \in \mathcal{F}$ that approximate $f$.
Moreover, each $y$ is sampled with probability equal to the squared of its Fourier coefficient $|\widehat f(y)|^2$, so components with large coefficients appear first.
This operation is known as Fourier sampling \cite{bernstein1993quantum}.
An alternative linear‑algebraic derivation of BVNs is given in \Cref{app:linear_algebra}.

\paragraph{A Neural View on the Bernstein--Vazirani Model.}
We can view the BV algorithm as a two-layer network, which we call a \emph{Bernstein--Vazirani Network (BVN)} and depict in \cref{fig:bv_model}. 
Hidden units compute parity activations $\chi_y(x)=(-1)^{y\odot x}$ on binary inputs $x_i$ weighted by $y_i$, and the output aggregates them with coefficients $\widehat f(y)$. 
If $f$ satisfies the BV promise, a single hidden unit suffices; otherwise, multiple units contribute according to $|\widehat f(y)|^2$. 
Importantly, this network ``growth'' comes purely from superposition and measurement, without additional quantum resources.

\subsection{Generalised Bernstein--Vazirani Networks}
\label{sec:iqnn}

The standard BV algorithm learns within a restricted hypothesis class: 
the basis elements $\chi_y$ are \emph{linear parity functions} in $\mathbb{Z}_2^n$, so the BV promise effectively assumes that $f$ is linear. 
Most practical learning functions will fail this assumption, possessing dense rather than sparse Fourier spectra and requiring many samples for the approximation.
Our goal in generalising the BV algorithm is to design task‑adaptive basis functions (i.e., the promise) with inductive biases that better align with the target function $f$. 
We achieve this along two complementary axes:

\begin{itemize}[leftmargin=*,itemsep=0.1em,topsep=0.1em]
	\item \textbf{Interference Operator.} 
	In standard BVNs, the Hadamard operator induces low-expressivity parity activations. 
	Replacing this alone directly implies alternative features.
	However, any alternative interference operator, being unitary, still primarily probes linear structure over $\mathbb{Z}_2^n$.
	\item \textbf{Expressive Input Representations.} 
	To capture non-linearity, we enlarge the quantum system and construct non-linear representations of the input $x$ in an auxiliary register prior to interference, making them potentially parameter-dependent to model implicit biases.
\end{itemize}

\textbf{Notation.}
We will call a unitary $I$ an \emph{interference operator} if its action on computational basis states $x \in \Z_2^n$ induces a family of functions $\{\xi_y\}_{y\in \Z_2^n}$ via
\begin{equation}
	I\ket{x} = \sum_{y \in \mathbb{Z}_2^n} \xi_y(x) \ket{y}.
\end{equation}

\paragraph{The Method.}

\begin{figure}[t]
	\centering
	\vspace{-.3cm}
	\includegraphics[trim={.2cm .3cm 0cm 0cm}, clip]{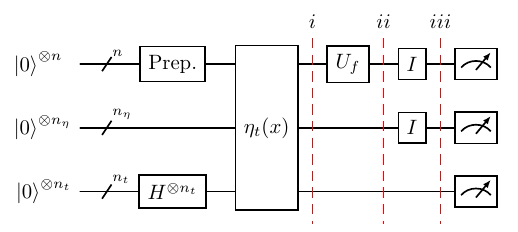}
	\caption{Our generalised Bernstein--Vazirani algorithm.
		We use three quantum registers: An input register, a register that computes a weighted representation of the inputs, and a weight register controlling this representation. 
		The algorithm has three stages:
		(i) Prepare a uniform superposition of all inputs (Prep.) and weight configurations, and compute the representations;
		(ii) Query the oracle encoding the ground‑truth labels across all samples in the superposition;
		(iii) Apply an interference operator so the labels interfere with the basis functions generated by the representation.
		\vspace{-3mm}
	}
	\label{fig:iqnn_circuit}
\end{figure}

We now propose a principled algorithm for
\emph{QML by interference}, inducing our \emph{generalised BVNs}.
To obtain expressive basis functions, we allocate two registers in addition to the input register:
One register that encodes a pre-processed \emph{representation} $\eta_t(x)\in \Z_2^{n_\eta}$ of the inputs $x \in \Z_2^{n}$, and one register that stores the \emph{weights or parameters} $t\in \Z_2^{n_t}$ associated with this representation.

Our generalised construction, as illustrated in \cref{fig:iqnn_circuit}, follows the three main steps of the Bernstein--Vazirani algorithm.
The inputs of the algorithm are training pairs $(x, f(x))_{i=1}^m$; a function oracle $U_f$ to label the inputs; and an interference operator $I$.
The main steps are as follows:

\begin{itemize}[leftmargin=*,itemsep=0.1em,topsep=0.1em]
	\item The initial step puts all training samples and parameters of the expressive representation in uniform superposition, then runs the representation circuit:
	\begin{align}
		\ket{\psi_1^{\prime\prime}} &= \frac{1}{\sqrt{m}} \sum_x \ket{x}\ket{0}\ket{0},\label{eq:prep1}\\
		\ket{\psi_1^\prime} &= \frac{1}{\sqrt{m2^{n_t}}} \sum_t \sum_x \ket{x}\ket{0}\ket{t},\text{ and }\label{eq:prep2}\\
		\ket{\psi_1} &= \frac{1}{\sqrt{m2^{n_t}}} \sum_t \sum_x \ket{x}\ket{\eta_t(x)}\ket{t}\label{eq:prep3}.
	\end{align}
	
	\item The second step queries the oracle to mark the ground-truth labels $f(x)$ as amplitudes on the inputs:
	\begin{equation}
		\label{eq:oracle_query}
		\ket{\psi_2} = \frac{1}{\sqrt{m2^{n_t}}} \sum_t \sum_x f(x) \ket{x}\ket{\eta_t(x)}\ket{t}.
	\end{equation}
	
	\item The third step creates, from the input and representation registers, interference of the function values: 
	\begin{equation}
		\label{eq:gbv_state}
		\ket{\psi_3} = \frac{1}{\sqrt{2^{n_t}}}\sum_{yzt} \widehat f(yzt)\ket{y}\ket{z}\ket{t},
	\end{equation}
	where
	\begin{align}
		\label{eq:coefs}
		\widehat f(yzt) &= \frac{1}{m}\sum_x f(x) \chi_{yzt}(x),\\
		\chi_{yzt}(x) &= \sqrt{m}\xi_y(x) \xi_z(\eta_t(x)).
	\end{align}
\end{itemize}

Sampling \Cref{eq:gbv_state} yields, with probability $|\widehat{f}(yzt)/\sqrt{2^{n_t}}|^2$, the dominant basis vectors 
$y \in \Z_2^{n}$, $z \in \Z_2^{n_\eta}$, $t \in \Z_2^{n_t}$, and thus the corresponding basis functions $\chi_{yzt}$.
Since the extended basis may be non‑orthonormal due to its over‑completeness relative to the effective number of training samples $m$, we must reconstruct each sampled coefficient $\widehat{f}(yzt)$, which is done efficiently as described next.

\paragraph{Reconstructing the Coefficients.}  
Let $X \in \R^{m \times k}$ denote the matrix of $k$ basis evaluations on the training set, $X_{ij} = \chi_j(x_i)$.  
Let $F \in \R^m$ be the vector of target values on the training set, and $\widehat{F} \in \R^k$ the vector of Fourier coefficients.  
We aim to recover $F$ from the basis $X$, or the best approximation of $f$ in $\chi$.  
In fact, the coefficients $\widehat{F}$ need to be scaled properly, as the sampling has learned the dominant correlations but not the overall scaling.
A standard approach to recover the coefficients is to solve the ridge regression problem $\min_{\widehat{F}} \Norm{F - X \widehat{F}}_2^2 + \lambda\Norm{\widehat{F}}_2^2$ via the Gram matrix:
\begin{equation}
	\label{eq:gram}
	G = X^\top X, \quad \widehat{F} \gets (G + \lambda I)^{-1} X^\top F,
\end{equation}
for a regularisation factor $\lambda \in \R_+$.
In practice, this reconstruction is comparable to a single training epoch in classical models, as it involves a single pass over the training set. 

\paragraph{Making New Predictions.}  
A new prediction on a data point $x$ is made by evaluating
\begin{equation}
	\label{eq:reconstruction}
	f(x) = \sum_{yzt} \widehat {f}(yzt) \chi_{yzt}(x).
\end{equation}
Our construction generalises the standard Bernstein--Vazirani algorithm in the sense that we sample BV basis functions $\xi_y(x)$ if $\xi_z(\eta_t(x))$ is a constant for all $x, z$, and $t$.
A summarising pseudocode of our generalised BVNs is provided in \Cref{alg:iqnn_blueprint}.

\begin{algorithm}[tb]
	\caption{Generalised Berstein--Vazirani Networks
		\phantom{Algorithm 1 }(Quantum Machine Learning by Interference)}
	\label{alg:iqnn_blueprint}
	\begin{algorithmic}
		\STATE {\bfseries Input:} Training pairs $(x, f(x))_{i=1}^m$; Function oracle $U_f$; \\ 
		Interference operator $I$; Number of shots \texttt{nshots}.
		\WHILE{\texttt{nshots} not reached}
		\STATE Prepare superposition of inputs $x$ via \Cref{eq:prep1}.
		
		\STATE Prepare superposition parameters $t$ via \Cref{eq:prep2}.
		
		\STATE Compute representation $\eta_t(x)$ via \Cref{eq:prep3}. 
		
		\STATE Query function oracle $U_f$ via \Cref{eq:oracle_query}. 
		
		\STATE Interfere the $x$ and $\eta$ registers via \Cref{eq:gbv_state}. 
		
		\STATE Measure basis states $ \{ \ket{y}\ket{z}\ket{t} \}$.
		\ENDWHILE
		
		\FOR{Parameter configuration $(y,z,t) \in \{\ket{y}\ket{z}\ket{t}\}$}
		\STATE Evaluate
		$\chi_{yzt}(\cdot) = \sqrt{m}\xi_y(\cdot) \xi_z(\eta_t(\cdot))$ on training set.
		\ENDFOR
		\STATE Compute Fourier coefficient $\widehat f(yzt)$ via \cref{eq:gram}.
		
		\STATE {\bfseries Return:} {Fourier pairs $(\widehat f(yzt), \chi_{yzt})$}.
	\end{algorithmic}
\end{algorithm}

\paragraph{The Generalised Bernstein--Vazirani Network.} 
The generalised BVN architecture is shown in \cref{fig:iqnn}. 
Unlike standard BVNs, the hidden units are no longer simple parity functions but expressive sub-models. 
Inputs $x_i$ pass through a hidden layer of optimised components $\chi_{yzt}$, whose predictions are aggregated at the output and weighted by the interference amplitudes $\widehat f(yzt)$. 
The architecture behaves similarly to an ensemble, but the ensemble is generated implicitly through interference rather than by training multiple models. 
Crucially, \emph{this rich ensemble incurs no additional quantum resource cost: only a single model is physically implemented, while the remaining effective models emerge purely through superposition and measurement.}

\subsection{The Expressive Representation of Inputs}
\label{sec:expressive_repr}
We now elaborate on our \emph{expressive representation} $\eta$ of the inputs.
This representation is central to obtaining smooth, regularised models that generalise well.

In principle \citep[Section 1.4.1]{nielsen2010quantum}, any Boolean function on binary inputs can be computed coherently: there exists a unitary $A$ (constructed from Toffoli and CNOT gates; and ancilla qubits) such that
\begin{equation}
	\label{eq:representation}
	A\ket{x}\ket{0}\ket{t} = \ket{x}\ket{\eta_t(x)}\ket{t}.
\end{equation}
That is, as an expressive representation, we could mirror standard classical network architectures—whose effectiveness is well established—to construct $\eta$ in a reversible quantum form. 
\Cref{app:layered_repr} discusses a quantum implementation of a classical 2-layer MLP with step activation functions and fixed biases.
In practice, however, implementing $A$ to mirror classical models requires ancilla qubits and uncomputation, and induces nontrivial circuit depths.

Constrained by the available qubits, we introduce a structured, resource‑efficient rectangle representation $\eta_{\text{rect}}$. 
For instance, since the standard BV basis is already complete, our rectangle representations activate parameter‑dependent hyper‑rectangles of the input domain.
This has the effect, after interference, of modifying the BV basis locally on different rectangles.
Implementation details and ablations are discussed in \cref{app:rectangle_repr}.

\subsection{The Interference Operator}
\label{sec:interference_ops}
We lastly discuss three candidate interference operators: Hadamard, Fourier, and Chebyshev. 
The interference operator determines the hidden activation and model expressivity, e.g., Hadamard produces step-like patterns, while Fourier and Chebyshev yield smoother variations (see \cref{app:interference_ops}).

\paragraph{The Hadamard Transform.}
The Hadamard transform $H$ \cite{nielsen2010quantum} is the most natural interference operator on $\mathbb{Z}_2^n$.
It acts on a basis state $\ket{x}$ as
\begin{equation}
	H\ket{x} = \frac{1}{\sqrt{2^n}} \sum_{y\in \mathbb{Z}_2^n} (-1)^{x \odot y} \ket{y},
\end{equation}
where $x$ and $y$ are binary vectors.  
The phase factor $(-1)^{x\odot y}$ encodes the parity of the binary inner product.
The Hadamard transform is the Fourier transform on $\mathbb{Z}_2^n$.

\paragraph{The Fourier Transform.}
A more general Fourier transform $F$ \cite{nielsen2010quantum} arises when we embed $\mathbb{Z}_2^n$ into the complex vector space indexed by integers in $\mathbb{Z}_{2^n}$.
In this viewpoint, the $x$ is interpreted as an integer in $\{0,\dots,2^n-1\}$.
The transform acts on basis states as
\begin{equation}
	F\ket{x}
	= \frac{1}{\sqrt{2^n}}
	\sum_{y\in \mathbb{Z}_2^n}
	e^{i 2\pi xy / 2^n} \ket{y},
\end{equation}
where $xy$ denotes standard integer multiplication.

\paragraph{The Chebyshev Transform.}
The Chebyshev transform $T$ \cite{williams2023quantum} constructs Chebyshev polynomials on discretised Chebyshev nodes.
For continuous inputs $x\in[-1,1]$, Chebyshev polynomials are defined as $\chi_y(x)=\cos(y\arccos(x))$.
In the discrete setting, we view $x\in\mathbb{Z}_2^n$ as an integer and map it to a Chebyshev node $x^{\mathrm{ch}} := \cos \left( \pi (2x + 1) /2^{n+1} \right)$.
Defining polynomials $T_y(x) := \cos \left( y \arccos(x^{\mathrm{ch}}) \right)$, the transform acts as
\begin{equation}
	T\ket{x}
	= \frac{1}{2^{\frac{n}{2}}}\, T_0(x)\ket{0}
	+ \frac{1}{2^{\frac{n-1}{2}}}
	\sum_{y\in \mathbb{Z}_2^n \setminus \{0\}}
	T_y(x)\ket{y}.
\end{equation}

\section{Experimental Results}
\label{sec:exterimental_results}
We validate our BVNs on several representative supervised learning tasks of adequate size for classical simulation, from 
hand‑crafted sanity-check tasks \Cref{sec:2d_expressive_repr}), through 
synthetic 2D and real‑world 4D classification (\Cref{sec:2d_classification}), to implicit image representation (\Cref{sec:im_repr}). 
\Cref{sec:complexities} analyses complexities and compares qubits, parameters, shots and runtime costs, with extended discussions covering the sampling complexity in \Cref{app:complexities_and_comparision}.
\Cref{sec:ablations} lastly performs ablation studies on the proposed BVNs.
All quantum circuits, unless otherwise stated, are simulated noise‑free using Pennylane’s \texttt{lightning.qubit} \cite{bergholm2018pennylane} to investigate properties of the new paradigm.
We compute on a conventional computer (AMD Ryzen 9 5900X 12-Core Processor CPU with 128GB RAM).
The classification metric is the top-1 accuracy between ground-truths and rounded predictions.

\subsection{Fitting the Expressive Representation in 2D}
\label{sec:2d_expressive_repr}
In \Cref{fig:2d_expressive_repr} we demonstrate the efficiency of the generalised model over the standard one, constructing 2D classification tasks whose regions match our expressive basis functions, making the target directly representable in the generalised basis. 
With Chebyshev interference and only $100$ shots, the generalised BVN achieves $100\%$ accuracy, whereas the standard BVN reaches only ${\sim}70\%$. 

\subsection{Classification in 2D and 4D}
\label{sec:2d_classification}
We experiment with classification tasks.
First, we solve binary classification in 2D on a variety of synthetic datasets shown in \cref{fig:shapes}, including the canonical blobs, moons, and a circle, as well as several hand-made shapes with different boundaries, yielding ten datasets S1-S10.
Second, we classify the real-world Iris \cite{fisher1936use} ($150$ samples) and Penguins \cite{horst2022palmer} datasets ($333$ samples), each consisting of 4D vectors across three classes (1,2,3).  
Data are normalised to the qubit grid. 
We benchmark our BVNs with Hadamard and Chebyshev operators against a classical $3$-layer MLP baseline with 10 units per layer, a classical SVM model, and a $20$-layer universal quantum single-qubit classifier (SQC) \cite{perez2020data}. 
We vary the training set size ($25\%, 50\%, 75\%, 100\%$) and evaluate on the full datasets to assess how well the models \emph{generalise}, i.e., learn the classification boundaries.
\Cref{app:inspection} inspects sampled pairs $(\widehat f, \chi)$. 
Results report the mean performance over five runs per model and dataset.

\Cref{fig:2d_classification} presents the binary classification results in 2D. 
Our BVNs successfully learn the decision boundaries.  
With only $50\%$ of the training data, they reach nearly $100\%$ accuracy.  
In the low training split, we observe that increasing the regularisation parameter $\lambda$ improves generalisation.  
The MLP, the SVM and single-qubit classifier baselines struggle on datasets with high variation (S2, S6, S9), whereas the BVN bases are expressive enough to capture high variations.  
The MLP’s behaviour reflects its spectral bias, and we conjecture that the SQC is constrained by its depth.

\Cref{fig:4d_classification} shows the 4D results. 
Here, we observed that the small number of training samples (${<}500$) relative to the large Hilbert‑space dimension ($2^{16}$) leads to severe sampling leakage \cite{wakeham2024inference}. 
Assigning a dummy non‑zero fill‑label to inputs outside the training set mitigated this issue. 
We use $\lambda=0.1$, refer to \Cref{app:filled_comparison} for other values of $\lambda$ and to \Cref{sec:dummy_fill} for Ablations on fill fraction and value. 
Unfilled data lead to poor generalisation, whereas filled data, while degrading the standard BVN, substantially improve the generalised BVN, bringing it on par with baselines.
We attribute the SVM's slightly better performance over the MLP to the small size of the datasets.

\begin{figure*}[t]
	
	\centering
	\hfill
	\begin{subfigure}{.13\textwidth}
		\vspace{-.1cm}
		\includegraphics[width=1.15\linewidth]{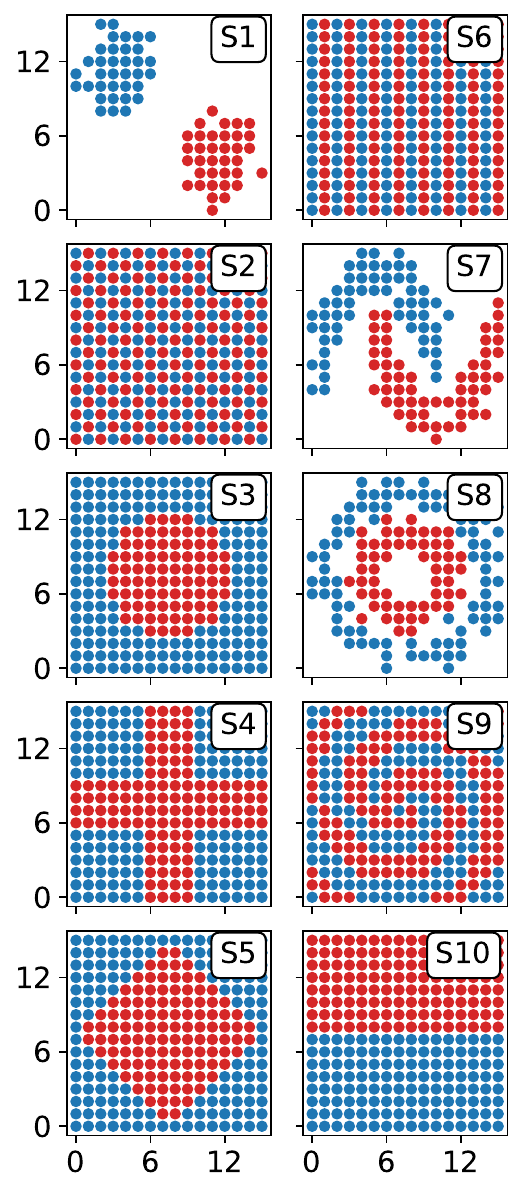}
		\caption{2D datasets}
		\label{fig:shapes}
	\end{subfigure}
	\hfill
	\begin{subfigure}{.8\textwidth}
		\vspace{-.2cm}
		\includegraphics[width=.95\linewidth, trim={0cm .26cm 0cm .2cm}, clip]{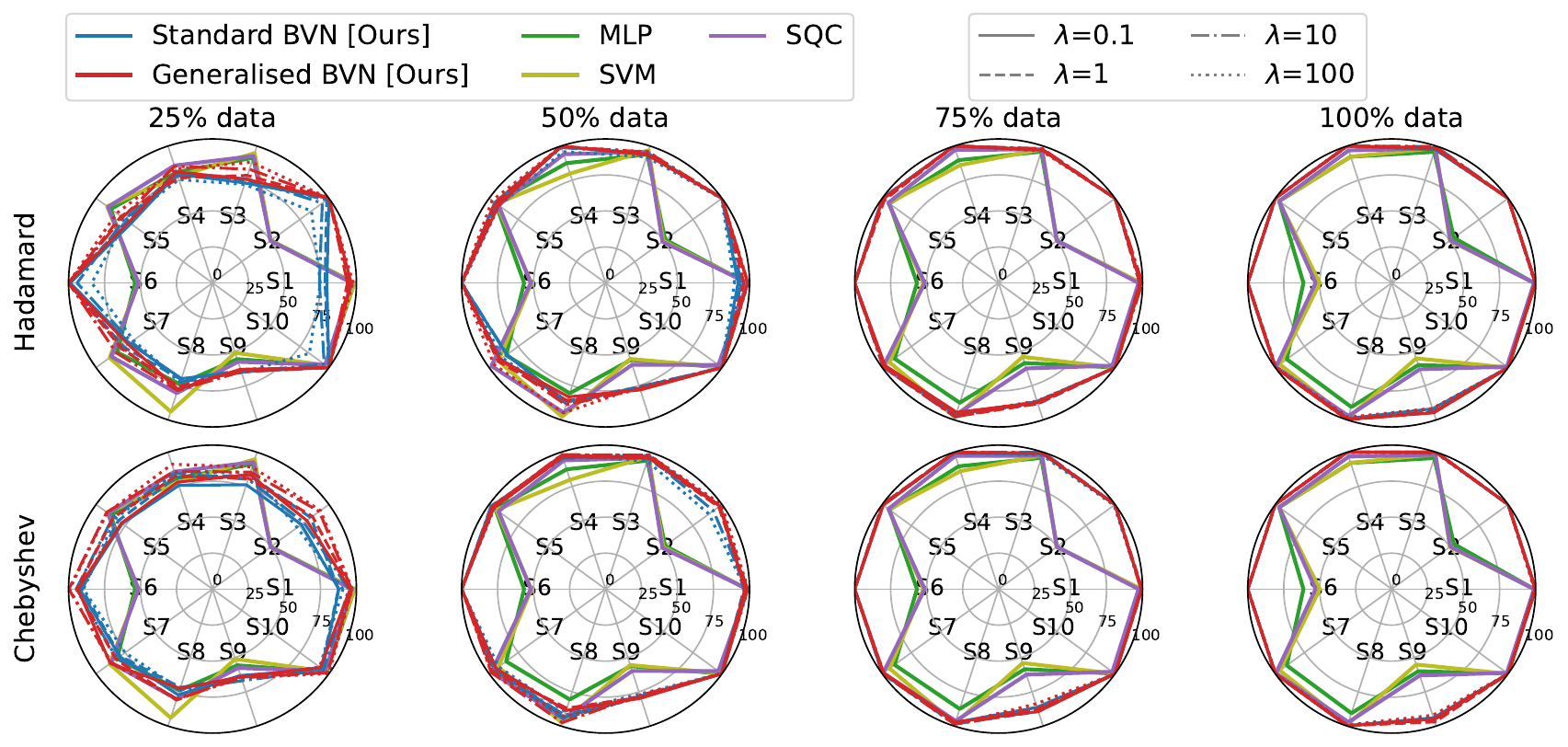}
		\caption{Classification results}
		\label{fig:2d_classification}
	\end{subfigure}
	\vspace{-2.mm}
	\caption{
		Binary classification results on synthetic 2D datasets.
		The top and bottom rows show BVN results with Hadamard and Chebyshev interference, with columns specifying data fractions for training. 
		Chebyshev outperforms Hadamard; the generalised BVN outperforms the standard BVN. 
		BVNs outperform MLP, SVM, and SQC, particularly on high-frequency shapes (S2, S6, and S9).\vspace{-3mm}
	}
	\label{fig:2d_classification_all}
\end{figure*}

\begin{figure}
	\centering
	\includegraphics[width=.93\linewidth, trim={0cm .3cm 0cm .1cm}, clip]{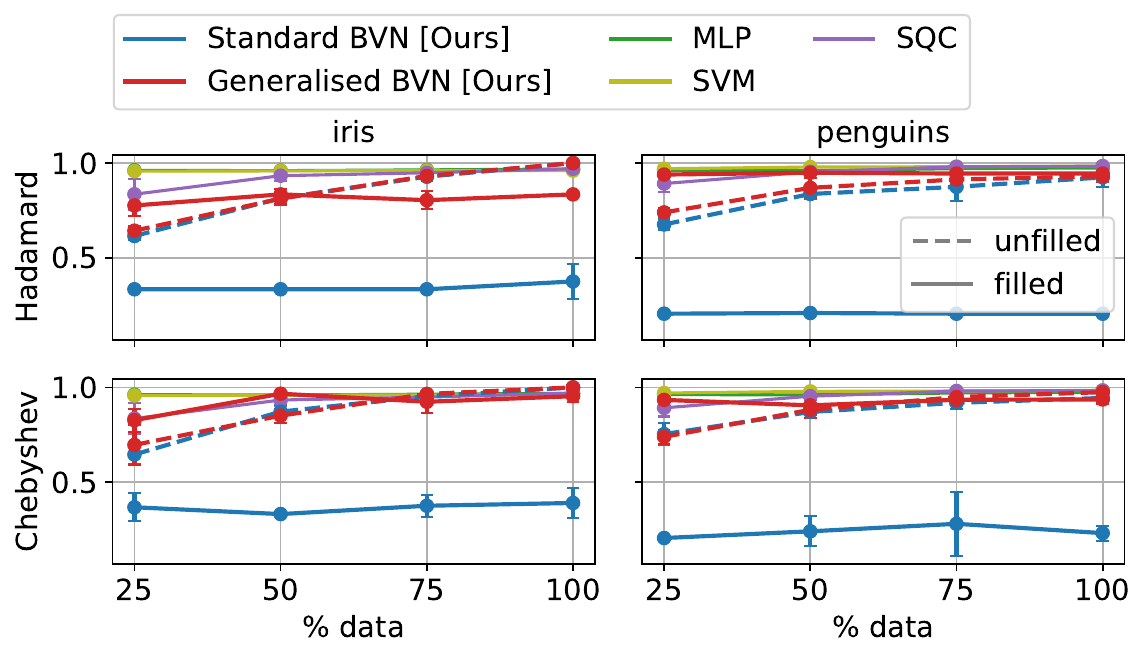}
	\caption{
		Real‑world classification results on Iris and Penguins datasets, with filled vs.~unfilled (original) comparison for BVNs.
		Filled data substantially improves the generalised BVN, allowing ${>}90\%$ accuracy with only $25\%$ data, comparable to the baselines.\vspace{-4.5mm}
	}
	\label{fig:4d_classification}
\end{figure}

\vspace{-1.5mm}
\subsection{Implicit Image Representation}
\label{sec:im_repr}
\vspace{-.1cm}
We run BVNs with $\lambda=0.1$ and Chebyshev on implicit image representations. 
We benchmark against the classical MLP+RFF and SIREN models \cite{benbarka2022seeing}, and the quantum QIREN \cite{zhao2024quantum} and QVF \cite{wang2025quantum} models. 
The models are trained on a coarse 64×64 image, evaluated on a finer 128×128 grid to assess interpolation, and we report PSNR and MSE.
We use 10,000 shots to demonstrate that BVNs can represent high‑frequency signals. 
Ablations in \cref{fig:shots_img_repr_sigmoid,fig:shots_img_repr_mask} show that even 1,000 shots yield good results visually.
\Cref{fig:image_fittind_results} presents our results. 
Again, the generalised BVN outperforms the standard one and stands competitively alongside the baselines.  
While it does not always yield the highest PSNR, it is visually more regular and coherent.

\section{Computational Complexity}
\label{sec:complexities}
Our labelled dataset encoding in BVNs has a cost of $O(mn)$ for $m$ training samples and $n$ qubits.  
The rectangle representations with input and parameter resolutions $n$ and $n_t$ yield $O(nn_t)$ gates.  
The Hadamard \cite{nielsen2010quantum} or Chebyshev \cite{williams2023quantum} operator costs $O(n)$ or $O(n+n^2)$ gates.  
Coefficient reconstruction scales as $O(mk(c+k+1) + k^3)$ for $k$ sampled states. 
We provide a full discussion in \Cref{app:complexities_and_comparision}.

\begin{figure*}[t]
    \centering
    \vspace{-3mm}
    \includegraphics[width=1\linewidth, trim={0cm 1.cm 0cm 0cm}, clip]{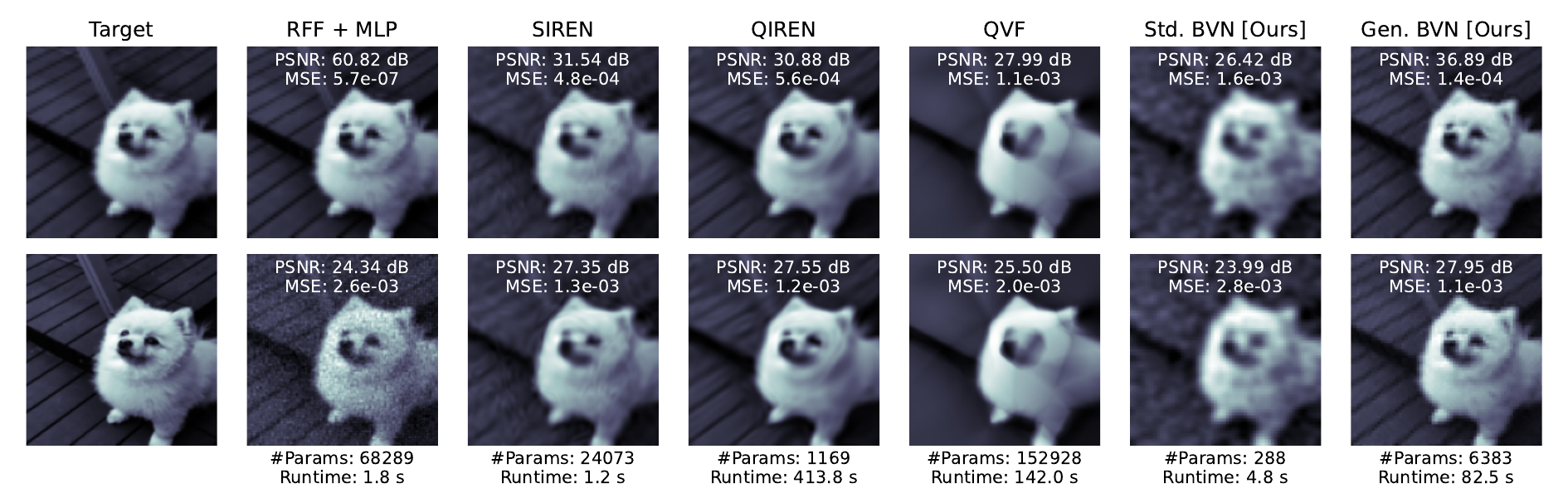}
    \caption{
    Implicit image representation.
    The top row shows results on the coarse (64×64) training image, while the bottom row shows evaluations on a finer (128×128) test image.
    The generalised BVN outperforms the standard BVN and is competitive with the benchmarks.
    }
    \label{fig:image_fittind_results}
\end{figure*}

\begin{table*}[t]
\centering
\caption{
Resource comparison across all models (quantum methods in blue).
We report the expected shot count for parameter‑shift training of PQCs and the runtime using back-propagation.
Our BVNs are significantly more efficient in terms of shots and runtime.
}
\label{tab:comparison}
\resizebox{0.8\linewidth}{!}{%
\begin{tabular}{l|c|c|c|c|c|c}
Model & \makecell{Num.~of\\qubits} & \makecell{Params \\ (PQC)} & \makecell{Params \\ (class.)} & \makecell{Epochs \\$\equiv$ Basis states for BVNs} & \makecell{Shots \\(parameter-shift)} & \makecell{Runtime \\ (back-prob.)} \\

\hline
\multicolumn{7}{c}{\textbf{Classification (2D), \Cref{fig:2d_classification_all}}} \\
\hline
MLP       & -- & -- & 151 & 30 & -- & 2.70s \\
SVM       & -- & -- & 253 & -- & -- & 0.001s \\
\rowcolor{quantumrow}SQC       & \textbf{1} & 120 & -- & 30 & $120\cdot30(2c)$ & 37.5s \\
\rowcolor{quantumrow}Std.\ BVN & \textit{\textbf{8(+1)}} & -- & -- & 45 & \textbf{100} & \textbf{0.027s} \\
\rowcolor{quantumrow}Gen.\ BVN & 14(+1) & -- & -- & 95 & \textbf{100} & \textit{\textbf{0.039s}} \\

\hline
\multicolumn{7}{c}{\textbf{Classification (4D), \Cref{fig:4d_classification}}} \\
\hline
MLP       & -- & -- & 171 & 100 & -- & 18.34s \\
SVM       & -- & -- & 178 & -- & -- & 0.0015s \\
\rowcolor{quantumrow}SQC       & \textbf{1} & 240 & -- & 30 & $240\cdot30(2c)$ & 819.35s \\
\rowcolor{quantumrow}Std.\ BVN & \textit{\textbf{16}(+1)} & --  & -- & 95 & \textbf{100} & \textbf{1.36s} \\
\rowcolor{quantumrow}Gen.\ BVN & 22(+1)  & -- & -- & 100 & \textbf{100} & \textit{\textbf{4.65s}} \\
\hline

\multicolumn{7}{c}{\textbf{Image Fitting, \Cref{fig:image_fittind_results}}} \\
\hline
RFF+MLP   & -- & -- & 68,289 & 100 & -- & 1.8s \\
SIREN     & -- & -- & 24,073 & 100 & -- & 1.2s \\
\rowcolor{quantumrow}QIREN     & \textit{\textbf{6}} & 270 & 899 & 2,000 & $270\cdot2000(2c)$ & 413.8s \\
\rowcolor{quantumrow}QVF       & \textbf{4} & 80 & 152,848 & 2,000 & $80\cdot2000(2c)$ & 142.0s \\
\rowcolor{quantumrow}Std.\ BVN & 12(+1) & -- & -- & 288 & \textbf{10,000} & \textbf{4.8s} \\
\rowcolor{quantumrow}Gen.\ BVN & 24(+1) & -- & -- & 6,383 & \textbf{10,000} & \textit{\textbf{82.5s}}
\end{tabular}
}
\vspace{-.2cm}
\end{table*}

\emph{
The total data-loading cost of BVNs, given by $O(mn)\times\texttt{nshots}$, scales significantly more favorably than that of PQC models, which is $O(mn)\times(2c\cdot p\cdot e)$, where $2c$ denotes the number of shots used for each partial derivative under the parameter-shift rule, $p$ is the number of trainable parameters, and $e$ is the number of training epochs.
}
\Cref{tab:comparison} provides a systematic resource comparison between BVNs and benchmarks, including the estimated shot count of parameter-shift training and true simulation runtime using back-propagation.
BVNs are markedly efficient: our 8-22-qubit BVNs are 2-3 orders of magnitude faster than the SQC or the multi‑qubit QIREN and QVF.

\paragraph{Approximation Error.}
Lastly, \Cref{app:sample_complexities} bounds the approximation error of BVNs.
Let  $\Omega\subset\mathbb Z_2^n$ be a sampled subset of basis states and $f_\Omega$ the truncated approximation of $f$ using basis states in $\Omega$.
The approximation error $\epsilon=\mathbb E_x[(f(x)-f_\Omega(x))^2]$ can be written as $\epsilon = \widehat F_{\bar\Omega}^{\top} G_{\bar\Omega} \widehat F_{\bar\Omega}$, with $\widehat F_{\bar\Omega} = (\widehat f(a))_{a \notin \Omega}$ denoting the residual coefficients and $G_{\bar\Omega}$ the Gram matrix from \Cref{eq:gram}.
Since $G_{\bar\Omega}$ is positive semidefinite, we have $\epsilon \le \lambda_{\max}(G_{\bar\Omega})\|\widehat F_{\bar\Omega}\|_2^2$, where $\lambda_{\max}(G_{\bar\Omega})$ is its largest eigenvalue.

For standard BVNs, we have $\|\widehat F_{\mathbb Z_2^n}\|_2=1$ and $G=I$, yielding $\epsilon = \|\widehat F_{\bar\Omega}\|_2^2 = 1-\|\widehat F_{\Omega}\|_2^2$.
Thus, the approximation error is determined entirely by the unsampled spectrum.
For generalised BVNs, the induced basis is generally overcomplete and non-orthogonal, yielding
$\epsilon \le \lambda_{\max}(G_{\bar\Omega})\|\widehat F_{\bar\Omega}\|_2^2$, showing a trade-off between the non-orthogonality of the extended basis and the expressive representation: 
Stronger correlations between basis functions increase the penalty factor $\lambda_{\max}(G_{\bar\Omega})$, while the richer representation $\eta_t$ can yield expressive representations of the target that substantially reduce the residual norm $\|\widehat F_{\bar\Omega}\|_2^2$.

\section{Ablations}
\label{sec:ablations}

\subsection{Dummy Filling}
\label{sec:dummy_fill}
We investigate the sensitivity of the generalised BVN to the dummy-fill value and fraction. 
We vary both and report results in \Cref{tab_dummy_filling_ablation} using tuples (value, fraction), with (0, 0) denoting unfilled data and (4, 1) the setting used in \Cref{fig:4d_classification}.

Benchmark methods are omitted as they fail on the filled data due to severe class imbalance: the $m<500$ informative samples are expanded to $2^{16}$ points, with the remaining points assigned a constant dummy label. 
In contrast, the generalised BVN remains performant across a wide range of fill values and fractions.

\begin{table}
	\caption{Influence of the dummy filling on the performance of BVNs. 
    Tuples (V, F) indicate dummy fill (value, fraction).} 
	\label{tab_dummy_filling_ablation}
    \centering
\resizebox{1.\linewidth}{!}{%
	\begin{tabular}{llcccccccc}
		\toprule
		& & \rotatebox{50}{(0, 0.0)} & \rotatebox{50}{(4, 0.25)} & \rotatebox{50}{(4, 0.5)} & \rotatebox{50}{(4, 0.75)} & \rotatebox{50}{(4, 1)} & \rotatebox{50}{(10, 1)} & \rotatebox{50}{(100, 1)} & \rotatebox{50}{(1000, 1)} \\
		\midrule
		\multirow[t]{4}{*}{Iris} & Stad. BVN & 0.84 & 0.78 & 0.65 & 0.39 & 0.33 & 0.33 & 0.33 & 0.33 \\
		& Gen. BVN & 0.77 & 0.76 & 0.74 & 0.99 & 0.88 & 0.99 & 0.93 & 0.85 \\
		\cline{1-10}
		\multirow[t]{4}{*}{Penguins} & Std. BVN & 0.60 & 0.38 & 0.40 & 0.23 & 0.20 & 0.20 & 0.20 & 0.20 \\
		& Gen. BVN & 0.59 & 0.86 & 0.85 & 0.87 & 0.96 & 0.93 & 0.95 & 0.96 \\
		\cline{1-10}
		\bottomrule
	\end{tabular}
    }
\vspace{-.5cm}
\end{table}

\vspace{-.2cm}
\subsection{Random Sampling}
\label{sec:random}
To isolate the value of quantum interference, we keep the coefficient reconstruction step from \Cref{eq:gram} unchanged and replace the quantum-selected basis functions with uniformly random basis functions of identical cardinality. 

\paragraph{Expected Coverage.} 
For an $n$-qubit input and an $(n_{\eta}+n_t)$-qubit extended register for expressive representation, the total measurement space dimension is $2^{n+n_{\eta}+n_t}$, but the relevant diversity for the input features is determined by the marginal distribution over the $n$-bit input register: $p(y) = \sum_{zt}p(y, zt)$.
For uniform sampling, $(y,z,t) \sim \mathcal{U}(2^{n+n_{\eta}+n_t})$, the probability of observing $y$ is $p(y) = 2^{n_{\eta}+n_t}/2^{n+n_{\eta}+n_t} = 1/2^{n}$, showing that the additional qubits do not improve the marginal probability of sampling a distinct input feature $y$.
Yet, the expected coverage, or number of unique $y$, after $N$ random draws, is
\begin{align}
	E &= \text{(number of y)} \cdot (1-\text{prob(y never occurs)}) \\
	&= 2^n(1-(1-1/2^n)^N) \approx 2^n(1-e^{-N/2^n}),
\end{align}
which shows that the number of shots for covering a constant fraction of the input space scales exponentially with $n$.

\paragraph{Empirical Evidence.}
In \Cref{fig:random_ablation}, we validate the above analysis by comparing the performance of the approximated function using basis functions identified by our BVNs and uniform random sampling.
We report average classification accuracy across the synthetic shapes in \Cref{fig:2d_classification_all}, the real-world filled Iris and Penguins datasets in \Cref{fig:4d_classification}, and PSNRs from the image fitting experiment in \Cref{fig:image_fittind_results}.

\begin{figure}
	\centering
	\includegraphics[width=.88\linewidth]{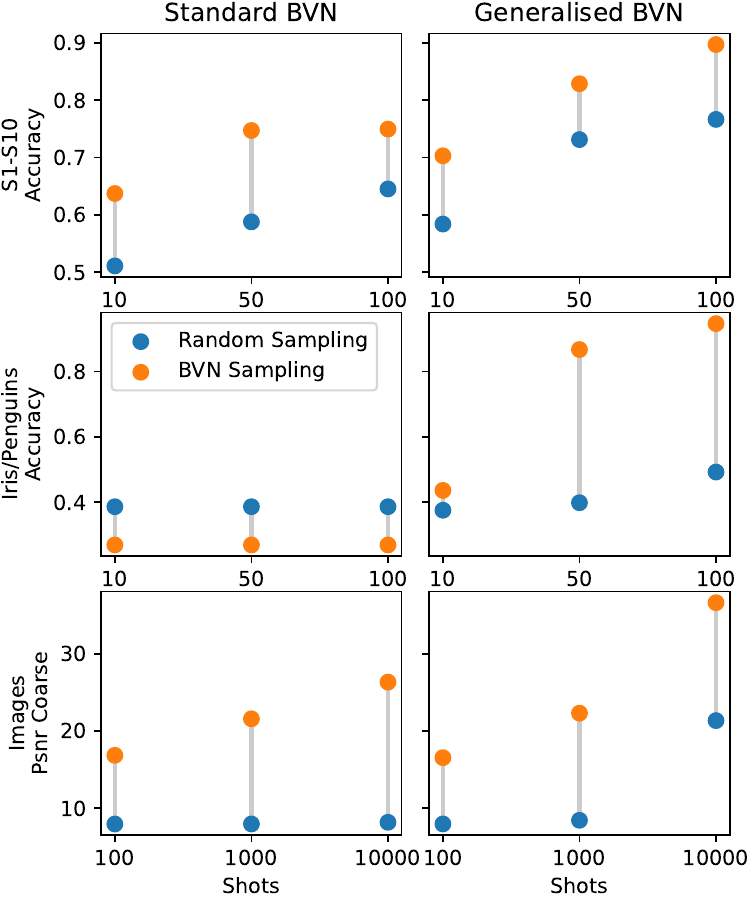}
	\caption{
		Ablation of interference-based sampling versus random uniform sampling. 
        BVN's performance indicates that interference effectively identifies informative basis functions.
\vspace{-.7cm}
	}
	\label{fig:random_ablation}
\end{figure}

BVNs consistently outperform random sampling.
As expected, the advantage is amplified as the search space grows: 
compare synthetic and real-world results using $N=100$ shots for $n=8/14$ and $n=16/22$ qubits in the standard/generalised BVN, respectively. 
For the image fitting experiment ($n=12$), random sampling benefits from a larger measurement budget at $10^4$ shots, reaching $~\sim$23 dB PSNR. 
Yet, both BVNs rapidly identify informative basis functions, particularly in the low-shot regimes ($N=100,1000$). 

\subsection{Robustness to Noise}
\label{sec:noise}
To evaluate robustness to hardware noise, we compare BVNs and SQC \cite{perez2020data} on the S8 shape of \Cref{fig:shapes} using a NISQ-inspired noise model \cite{abughanem2026early}:
Depolarizing errors of $5\times10^{-3}$ on 1-qubit gates, $5\times10^{-2}$ on 2-qubit gates, and a $2\%$ readout bit-flip error.
State preparation in BVNs is decomposed into elementary gates to expose this step to noise.
SQC is trained using backpropagation without noise and parameter-shift with noise. 
We report accuracy and runtime for both settings and compare BVNs against random sampling from \Cref{sec:random}. 
As shown in \Cref{tab:noise_ablation}, BVNs remain noise-robust and outperform random selection. 
SQC performs the best, but requires $\sim$3.2 hours of parameter-shift training, while BVNs take $<$15 seconds at competitive accuracy.

\begin{table}
	\centering
	\caption{Accuracy and runtime comparison in noise simulation. The numbers in brackets are in seconds. The experimental setting does not apply for classical methods (denoted by ``-'').}
	\label{tab:noise_ablation}
	\resizebox{1.\linewidth}{!}{%
		\begin{tabular}{llll}
			\toprule
			& Noise-free & Noise & Random \\
			\midrule
			\rowcolor{quantumrow}Std. BVN & 0.99 (0.005) & 0.86 (13.610) & 0.80 (0.005) \\
			\rowcolor{quantumrow}Gen. BVN & 0.98 (0.028) & 0.92 (13.525) & 0.85 (0.028) \\
			\rowcolor{quantumrow}SQC & 0.96 (112.066) & 0.99 (11513.526)	 & -\\
			MLP & 0.94 (3.801) & -& -\\
			SVM & 0.98 (0.010) & -& -\\
			\bottomrule
		\end{tabular}
	}	
	\vspace{-.4cm}
\end{table}

\section{Practical Recipe for BVNs}
\label{sec:recipe}
A central design question in BVNs is how to choose the interference operator, representation, and parameter resolution for a given application.
The key principle is to select them such that the induced basis aligns with the target function:
\begin{itemize}[leftmargin=*,itemsep=0.1em,topsep=0.1em]
    \item We replace the standard BV basis $\xi_y(x)=(-1)^{x\odot y}$ with $\chi_{yzt}(x)=\sqrt{m}\xi_y(x)\xi_z(\eta_t(x))$ to obtain potentially sparser representations. 
    For example, a function requiring two BV basis states in $\xi$ may become one-sparse in $\chi$ under a suitable representation.
    \item Beyond sparsification, generalised BVNs enable spectrum reshaping. 
    For example, a standard representation with coefficients $\sqrt{5/100}$ and $\sqrt{95/100}$ makes the smaller informative component unlikely to be sampled. 
    A suitable representation can redistribute this to make informative components more accessible.
\end{itemize}
Ultimately, these design choices should be guided by assumed structures in the data, like convolutional architectures for images or attention mechanisms for sequences.
A practical heuristic is to choose an interference operator expected to concentrate the target spectrum, a representation that encodes task-specific structures or invariances, and vary the parameter resolution to control the resulting expressivity.

\section{Conclusion}
\label{sec:conclusion}

We proposed an interference‑based alternative to variational QML by reinterpreting the Bernstein--Vazirani algorithm as a neural model, yielding Bernstein--Vazirani Networks (BVNs).
Our generalised BVNs incorporate inductive bias to improve fitting and reduce measurement cost. 
Experimentally, the generalised BVNs achieved nearly $100\%$ classification accuracy using only $50\%$ of the training data, and reached close to $40$ dB PSNR on image‑fitting, consistently surpassing the standard model and often outperforming quantum and classical baselines. 
Our work advances QML by introducing a conceptually new framework, which, we believe, will lead to a plethora of improvements.

\textbf{Limitations and Future Work.} 
While our BVNs make a first step in a fundamentally new QML direction, many open questions remain. 
For instance, while the generalised BVN performs well on synthetic data, scaling to real‑world data (e.g., Iris, Penguins) features spectral leakage. 
Beyond our ad-hoc dummy‑fill, basis engineering to align the parameter manifold with the data could help. 
Also, rectangle representations, while expressive, introduce artefacts and may hurt interpolation. 
Future work would explore smoother alternatives. 
We detail these directions in \Cref{app:future_work}.
%


\section*{Acknowledgements} 
This work was supported by the Deutsche Forschungsgemeinschaft (DFG, German Research Foundation), project number 534951134. 
T. Birdal was supported by a UKRI Future Leaders Fellowship [grant number MR/Y018818/1]. 
NKM and MM acknowledge support from the Lamarr Institute for Machine Learning and Artificial Intelligence.
%


\bibliographystyle{icml2026}
\bibliography{ref}

\newpage
\appendix
\crefalias{section}{appendix}

\section*{Appendix}

This appendix entails further technical details and experiments to complement the main text. 
It includes:

\resizebox{1.\columnwidth}{!}{%
\centering
\begin{tabular}{p{5cm}|p{1.6cm}|p{1.3cm}}
\textbf{Title} & \textbf{Text} & \textbf{Appendix} \\
\hline
\hline
Derivation of the interference behind the Berstein-Vazirani algorithm & \Cref{sec:warm_up} & \Cref{app:bv} \\
\hline
Proof of the Fourier Expansion Theorem &\Cref{theo:fourier_bool}&\Cref{app:proof_fourier}\\ 
\hline
Derivation of BVNs from a Linear Algebra Perspective &\Cref{theo:fourier_bool}&\Cref{app:linear_algebra}\\ 
\hline
Layered Expressive Representation&\Cref{sec:expressive_repr}&\Cref{app:layered_repr}\\
\hline
Rectangle Expressive Representation&\Cref{sec:expressive_repr}&\Cref{app:rectangle_repr}\\
\hline
Interference Operators in Comparison&\Cref{sec:interference_ops}&\Cref{app:interference_ops} \\
\hline
Complexities and Systematic Comparison &\Cref{sec:exterimental_results}&\Cref{app:complexities_and_comparision}\\
\hline
Inspecting the sampled Coefficients&\Cref{sec:2d_classification}&\Cref{app:inspection}\\
\hline
Comparing Filled and Unfilled Training Data&\Cref{sec:2d_classification}&\Cref{app:filled_comparison}\\
\hline
Future Work and Technical Extensions& \Cref{sec:conclusion}&\Cref{app:future_work}\\
\hline
\end{tabular}
}

\section{Interference in Bernstein--Vazirani}
\label{app:bv}
We shortly review the Bernstein--Vazirani algorithm \cite{bernstein1993quantum} and explain how interference occurs.

From the three steps of the Bernstein--Vazirani algorithm (\Cref{eq:bv1,eq:bv2,eq:bv3}) and after replacing $f$ by its signature $f(x) = s \odot x$, we have
\begin{align}
    \ket{\psi_1} &= \frac{1}{\sqrt{2^n}} \sum_{x \in \Z_n^2} \ket{x} \\
    \ket{\psi_2} &= \frac{1}{\sqrt{2^n}} \sum_{x \in \Z_n^2} (-1)^{f(x)} \ket{x} \\
    \ket{\psi_3} &= \sum_{y \in \Z_n^2} \widehat{f}(y)\ket{y}.
\end{align}
with Fourier coefficients
\begin{align}
    \widehat{f}(y) 
    &= \frac{1}{2^n} \sum_{x \in \Z_n^2} (-1)^{s\odot x+y\odot x}\\
    &=\frac{1}{2^n} \sum_{x \in \Z_n^2} (-1)^{(s\odot x+y\odot x) \bmod 2} \label{eq:phase}.
\end{align}
We want to show that $\widehat{f}(y) = 1$ if and only if $y=s$ and $0$ otherwise.
On $\Z_n^2$, we use two important properties:
\begin{align}
    (a + b)\hspace{-.3cm} \mod 2 &= (a\hspace{-.3cm} \mod 2 + b\hspace{-.3cm} \mod 2) \bmod 2, \label{eq:mod1}\\
    (a \cdot b)\hspace{-.3cm} \mod 2 &= (a\hspace{-.3cm} \mod 2) \cdot (b\hspace{-.3cm} \mod 2) \bmod 2\label{eq:mod2}.
\end{align}

After applying \Cref{eq:mod1,eq:mod2} to the phase term in \Cref{eq:phase}, it holds
\begin{align}
    &(s\odot x + y\odot x) \bmod 2 \\
    &= (s^\top x \bmod 2 + y^\top x \bmod 2) \bmod 2 \\
    &= ((s \bmod 2)^\top (x \bmod 2) \nonumber \\
    &\qquad + (y \bmod 2)^\top (x \bmod 2)) \bmod 2 \\
    &= ((s \bmod 2 + y \bmod 2)^\top (x \bmod 2)) \bmod 2 \\
    &= (((s + y) \bmod 2)^\top (x \bmod 2)) \bmod 2 \\
    &= ((s \oplus y)^\top x) \bmod 2 \\
    &= (s \oplus y)^\top \odot x \label{eq:phase_only}.
\end{align}

Now substituting \Cref{eq:phase_only} in  \Cref{eq:phase}, we have
\begin{align}
    \frac{1}{2^n} \sum_{x \in \Z_n^2} (-1)&^{(s\odot x+y\odot x) \bmod 2} \nonumber \\
    &= \frac{1}{2^n} \sum_{x \in \Z_n^2} (-1)^{(s \oplus y)^\top \odot  x}
\label{eq:orthogonality1}\\
    &= \frac{1}{2^n} \sum_{x \in \Z_n^2} \prod_{i=1}^n (-1)^{(s_i \oplus y_i) \cdot  x_i}
\label{eq:orthogonality2}\\
    &= \frac{1}{2^n} \prod_{i=1}^n \sum_{x_i \in \Z_1^2} (-1)^{(s_i \oplus y_i) \cdot  x_i}
\label{eq:orthogonality3}\\
    &= \frac{1}{2^n} \prod_{i=1}^n (1 +(-1)^{(s_i \oplus y_i)})
\label{eq:orthogonality4}.
\end{align}
In the last term, we see that the complete sum evaluates to $0$ if $s_i \oplus y_i=1$ for any index $i$, and to $1$ if $s_i \oplus y_i = 0$ for all $i$.
On the other hand, we have $s_i \oplus y_i = 0$ if and only if $s_i = y_i$.

This collapsing sum to either $0$ or $1$ is called \emph{interference}, where undesired $y$ interfere destructively, while the desired solution interferes constructively.

\section{Proof of the Fourier Expansion Theorem}
\label{app:proof_fourier}

In this section, we provide the proof of the Fourier Expansion \Cref{theo:fourier_bool}.

\begin{proof}
The proof follows from linear algebra by expressing $f$ uniquely in an orthonormal basis via inner products.

First, we verify that $\{\chi_y\}_{y \in \Z_2^n}$ forms an orthonormal basis of the vector space $\mathcal{F} := \{f:\{0, 1\}^n \to \R\}$ of functions from $\Z_2^n=\{0, 1\}^n$ to $\R$. 
This is principally possible because any function in $\mathcal{F}$ can be replaced by a vector in $\R^{2^n}$ by enumerating all function values.
We define the inner product on $\Z_2^n$ as $\braket{f, g} = \mathbb{E}_{x \in \Z_2^n}[f(x)\,g(x)]$.

As derived in \Cref{eq:orthogonality1,eq:orthogonality2,eq:orthogonality3,eq:orthogonality4}, $\forall y_1, y_2 \in \Z_2^n$, it holds 
\begin{align}
    \langle \chi_{y_1}, \chi_{y_2} \rangle 
    &= \frac{1}{2^n} \sum_{x \in \Z_2^n} (-1)^{(y_1\odot x+y_2\odot x) \mod 2} \\
    &= 
    \begin{cases}
        1, & \text{if } y_1 = y_2, \\
        0, & \text{otherwise}.
    \end{cases}
\end{align}
Since there are $2^n$ functions $\chi_y$, matching the dimension of $\mathcal{F}$,  
the set $\{\chi_y\}_{y \in \Z_2^n}$ forms an orthonormal basis.

Hence, any $ f \in \mathcal{F}$ expands uniquely as  
\begin{equation}
    f(x) = \sum_{y \in \Z_2^n} \widehat{f}(y) \chi_y(x),
\end{equation}
where
\begin{equation}
    \widehat{f}(y) = \langle f, \chi_y \rangle = \frac{1}{2^n} \sum_{z \in \Z_2^n} f(z) \chi_y(z).
\end{equation}
\vspace{-.2cm}
\end{proof}

\section{Alternative Derivation of (Generalised) Bernstein--Vazirani Networks from a Linear Algebraic Perspective}
\label{app:linear_algebra}

Related, but not strictly equivalent concepts to our generalised Bernstein--Vazirani algorithm are \emph{generalised Fourier series} or \emph{Sparse dictionary learning}.

To (informally) generalise \Cref{theo:fourier_bool}, we assume that there is a set of functions $\mcal G = \{g_y\}_{y\in\mathbb{Z}_2^m} \subset \mcal F$, such that 
\begin{equation}
\label{eq:fourier_bool_gen}
    f(x) \approx \sum_{y \in \Z_2^m} \braket{f, g_y} g_y,
\end{equation} 
with $\braket{\cdot, \cdot}$, defined as $\braket{f, g_y} = \frac{1}{2^n} \sum_{x \in \Z_2^n}f(x)g_y(x)$, being the standard inner product in $\Z_2^n$.
We further still assume access to the function oracle $U_f$, which realises the state
\begin{equation}
\label{eq:psi}
    \ket{\psi} = \frac{1}{\sqrt{2^n}} \sum_{x \in \Z_2^n} f(x) \ket{x}.
\end{equation}

As our inputs $x$ live in $\R^{2^n}$ and we seek expressive models, we are interested in settings where $m \ge n$, i.e., where $\mcal G$ is an \emph{overcomplete basis} for the function space $\mcal F$.

Now, independently, let $\mcal Y := \{\ket{y}, y \in \Z_2^m\}$ be some orthonormal basis, with $\ket{y}$ written in the computational basis as
\begin{equation}
\label{eq:y}
    \ket{y} = \frac{1}{\sqrt{2^m}} \sum_{x \in \Z_2^n}\sum_{z \in \Z_2^{m-n}} a_y^{x,z} \ket{x,z},
\end{equation}
for some amplitudes $a_y^{x,z} \in \C$, where the $\ket{z}$ register is an ancillary register with $(m-n)$ qubits.  
We embed $\ket{\psi}$ from \Cref{eq:psi} in the $2^m$-dimensional Hilbert space as $\ket{\tilde{\psi}} = \ket{\psi,0}$.  
Expressing $\ket{\tilde \psi}$ in the basis $\mcal Y$ yields
\begin{align}
    \ket{\tilde \psi} &= \sum_{y \in \Z_2^m} \braket{y|\tilde \psi} \ket{y} \\
    &= \frac{1}{2^n} \sum_{y \in \Z_2^m} \sum_{p,q \in \Z_2^n} a_y^{p,0} f(q) \braket{p|q} \ket{y} \\
    \label{eq:psi_1}
    &= \frac{1}{2^n} \sum_{y \in \Z_2^m} \sum_{x \in \Z_2^n} a_y^{x,0} f(x) \ket{y}.
\end{align}

Now, to connect $\mcal Y$ and $\mcal G$, we observe that the amplitudes $a_y^{x,0}$ define the family of functions
\begin{equation}
    g_y(x) := a_y^{x,0}.
\end{equation}
Even though the initial state has ancilla qubits in $\ket{0}$, the \emph{different} $y$ basis vectors spread across the ancillary dimensions, giving rise to an \emph{overcomplete dictionary} of functions $g_y$.  
Constructing the basis $\mcal Y$ such that $g_y(x) = a_y^{x,0}$ transforms \Cref{eq:psi_1} into
\begin{equation}
    \ket{\tilde \psi} = \sum_{y \in \Z_2^m} \braket{g_y, f} \ket{y}.
\end{equation}
Measuring $\ket{\tilde \psi}$ in the basis $\mcal Y$ reveals, with probability $|\braket{g_y,f}|^2$, the corresponding $y$ and the associated $g_y$ needed to approximate $f$ according to \Cref{eq:fourier_bool_gen}.

If $n=m$ and the set $\mcal G$ is orthonormal in $\mcal F$, we recover the standard BV setting (up to the chosen basis $\mcal Y$).  
The generalised BV setting emerges when $m \ge n$, giving rise to an overcomplete and expressive basis.

\paragraph{How to design $\mcal Y$ that realises $\mcal G$?}
This is now an architectural design choice.  
The important point is that enlarging the system with ancillary qubits allows us to realise an overcomplete basis.  
The interference operator acting on the representation register effectively disentangles it from the input $x$, allowing us to write \Cref{eq:psi_1} as a sum over $x$ while isolating the $\ket{y}$ register.

\paragraph{How to measure in general bases?}
We are interested in measuring $\ket{\tilde \psi}$ in a basis $\mcal Y$ that is not necessarily the computational basis.  
Let $Y$ be the measurement observable of the basis $\mcal Y$.  
As a Hermitian operator, $Y$ can be diagonalised. 
There exists a unitary $U_Y$ such that
\begin{equation}
    U_Y Y U_Y^\dagger = D,
\end{equation}
where $D$ is diagonal in the computational basis.  
Then,
\begin{equation}
    \braket{\tilde \psi|Y|\tilde \psi} = \braket{\tilde \psi|U_Y^\dagger D U_Y|\tilde \psi}.
\end{equation}
Measuring $\ket{\tilde \psi}$ in the basis $\mcal Y$ thus reduces to implementing $U_Y$ followed by measurement in the computational basis.  
This is exactly the action of our expressive representation circuit followed by the interference operator.

\section{Layered Expressive Representation}
\label{app:layered_repr}

In the following, we provide a quantum implementation of a classical two-layer MLP with step activation functions and fixed biases as an exemplary expressive representation to be used in our generalised BVNs (see \Cref{sec:expressive_repr}).

\cref{eq:representation} is written compactly for the input, activation, and weights.
In practice, the system may be organised into several sub-registers to support different input dimensions and to implement a classical layered representation reversibly. 

\begin{definition}[Quantum Layer]
\label{def:quantum_layer}
A quantum layer $\ell$ consists of multiple units. 
Each unit is specified by an output register $\ket{0}$, an activation function $\sigma$, and $d$ weight registers $\ket{t}$, where $d$ is the dimension of the activation vector of layer $\ell-1$. 
Given an input activation register $\ket{\sigma_{\ell-1}}$, the layer implements the coherent update
\begin{equation}
    \ket{\sigma_{\ell-1}}\ket{0}\ket{t}
    \mapsto
    \ket{\sigma_{\ell-1}}\ket{\sigma_\ell}\ket{t},
    \ 
    \sigma_\ell = \sigma\big(\sum_{i=1}^d t_i\,(\sigma_{\ell-1})_i\big).
\end{equation}
Ancillary qubits are abstracted but may be required for reversible arithmetic.
\end{definition}

\begin{figure}[!ht]
  \centering
    \includegraphics[]{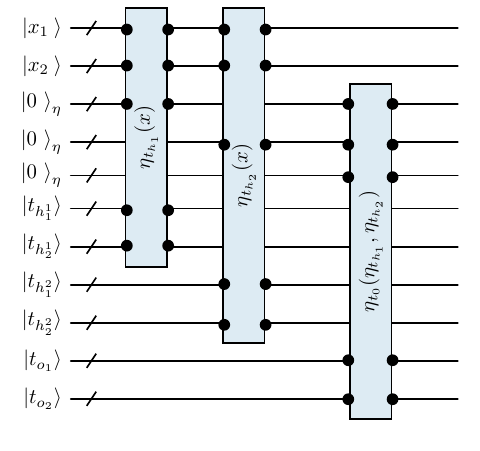}
  \caption{
  A quantum circuit realising an expressive representation of the input.
It implements a fully connected two-layer network.
The input is a two-dimensional register $\ket{x_1}\ket{x_2}$.
The hidden layer consists of two units, each computing $\eta_{t_{h_j}}(x) =\sigma (t_{h_j^1}x_1 +  t_{h_j^2}x_2)$
and the output unit computes $\eta_o$ in the same way, but acting on the hidden activations instead of the input.
The circuit is implemented using quantum arithmetic with ancillary qubits (not shown).
  }
\label{fig:two_layer_repr}
\end{figure}

In \cref{fig:two_layer_repr}, we illustrate a quantum circuit implementing an expressive representation of the input.
It realises a fully connected two-layer network.
The 2-dimensional input $\ket{x}$ is stored in two sub-registers, forming the input layer.
The hidden layer contains two units, each with two weight registers that modulate the input; the weighted sum is passed through an activation and written to an activation register.
The output layer is a single unit acting on the hidden activations.
For this example, the decomposition
$\ket{x}\ket{\eta_t(x)}\ket{t}$ 
from \cref{eq:representation} becomes
\begin{align}
    \ket{x} 
    &= \ket{x_1}\ket{x_2},\\
    \ket{\eta_t(x)} 
    &= \ket{\eta_{t_{h_1}}(x)}\ket{\eta_{t_{h_2}}(x)}\ket{\eta_{t_{0}}(\eta_{t_h})},\\
    \ket{t}
    &= \ket{t_{h_1^1}}\ket{t_{h_1^2}}\ket{t_{h_2^1}}\ket{t_{h_2^2}}\ket{t_{o^1}}\ket{t_{o^2}},
\end{align}
where 
$
    \eta_{t_{h_j}}(x) =\sigma \left(t_{h_j^1}x_1 + t_{h_j^2}x_2\right),
$
and similarly for $\eta_o$ acting on the hidden activations.

\paragraph{2-Layer Representation with Step Activation Function.}
We experimented with the architecture shown in \cref{fig:two_layer_repr} in the 2D case, but due to the induced circuit size, we were restricted to a two-layer model with two hidden units and one output unit, using binary weights and no biases.
The circuit computes the weighted sum and activates the output qubit (sets it to 1) if this sum does not exceed a hard-coded threshold.
This requires three qubits per unit: one weight qubit for each of the two input dimensions and one output qubit.
Such a binary 2-layer MLP therefore has six weights in total, corresponding to $2^6 = 64$ possible network configurations.

In \cref{fig:augmented_basis}, we visualise the possible activations produced by this binary model across all weight configurations.
The dimensional qubit-resolution of $x$ is $6$, and the thresholds of the hidden and output units are set to $32$ and $1$, respectively, in one setting, and to $64$ and $1$ in the other.
Although the network has 64 possible weight assignments, the resulting activations are not very expressive: only seven distinct partitions of the 2D input space appear in the first setting and three in the second.
The model would require learnable biases to shift activation boundaries freely, but introducing biases would significantly increase the qubit count.
In addition, the hidden activations are no longer needed once they have been used in the output layer.
Hence, they unnecessarily grow the qubit count.

\begin{figure}[ht]
    \centering
    \includegraphics[width=.8\linewidth]{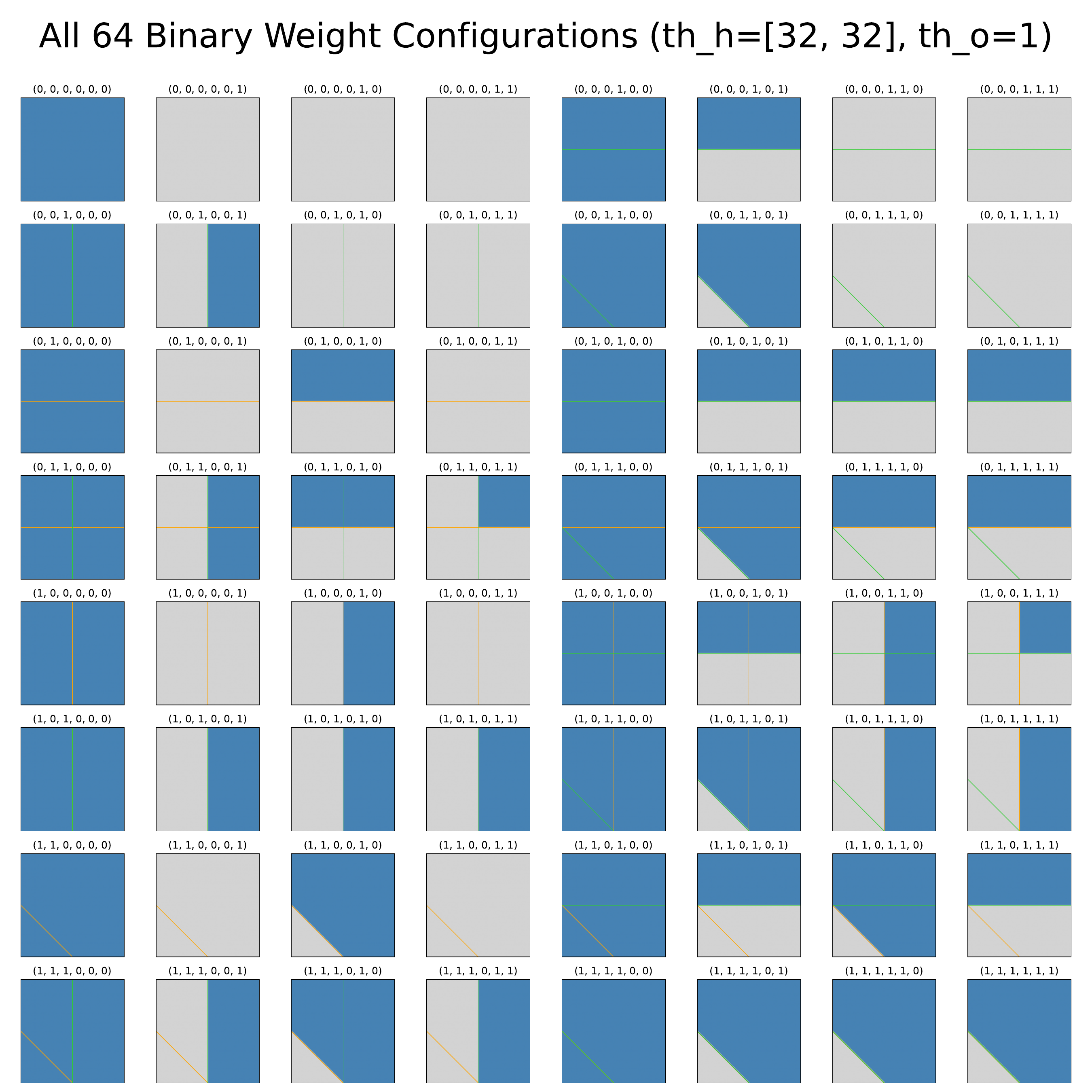}
    \hspace{1cm}
    \includegraphics[width=.8\linewidth]{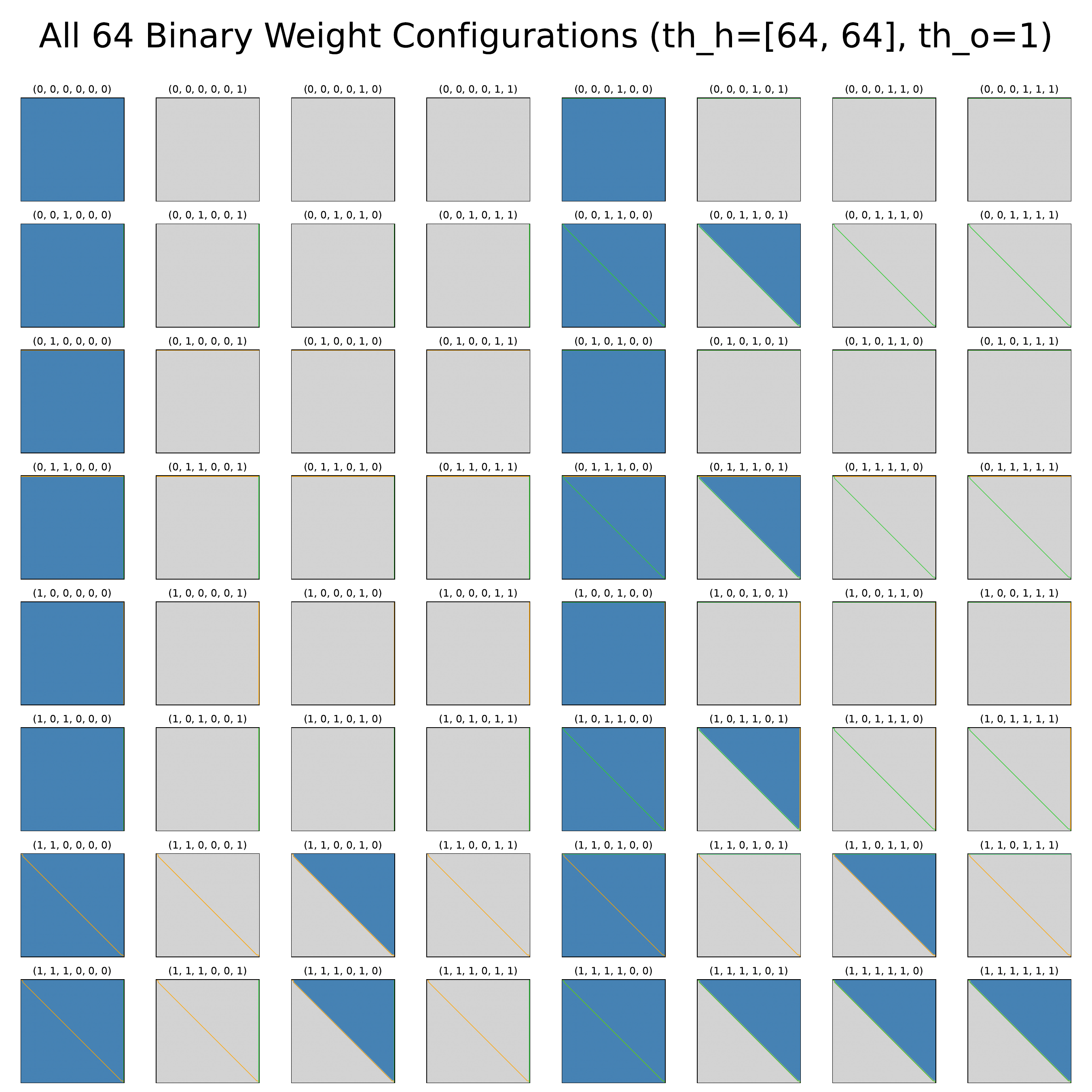}
    \caption{
    All distinct activation patterns produced by the 2-layer binary MLP for all 64 possible weight configurations.  
    Top: thresholds (32, 1) for hidden and output units.  
    Bottom: thresholds (64, 1).  
    Despite the large number of weight settings, only a small number of unique input-space partitions are realisable, highlighting the limited expressivity of this bias-free architecture.
    Blue means $1$ and grey $0$.
    }
    \label{fig:augmented_basis}
\end{figure}

In \cref{fig:shots_img_repr_sigmoid}, we visualise the image fitting results using the 2-layer MLP as the expressive representation, with thresholds $(16, 1)$ in the hidden and output layers. 
We observe that the MLP with fixed thresholds still lacks expressivity and does not significantly improve over the standard BVN.

\begin{figure}[ht]
    \centering
    \includegraphics[width=1\linewidth, trim={0cm 0cm 0cm .9cm}, clip]{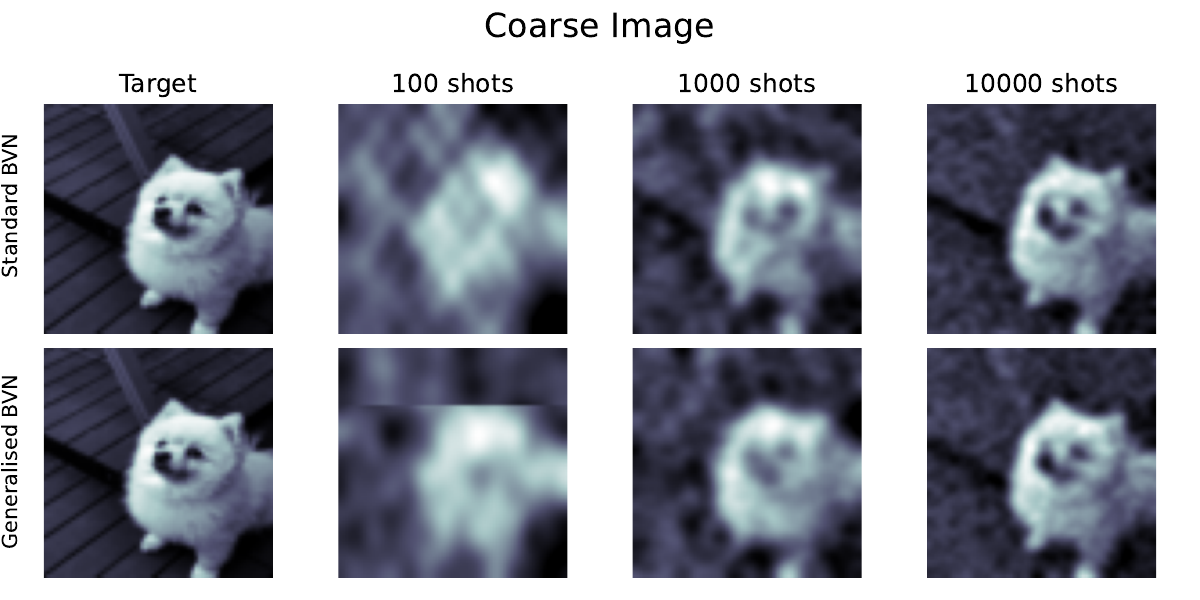}
    \caption{
    Visualising the fitting results of the standard (top row) and generalised (bottom row) BVNs for different numbers of shots, using the 2-layer MLP as the expressive representation.  
    With few shots, we observe the effect of the MLP, which activates different regions of the input domain.  
    However, the fixed region boundaries limit the model's ability to capture fine details in the image.
    }
    \label{fig:shots_img_repr_sigmoid}
\end{figure}

\section{Rectangle Expressive Representation}
\label{app:rectangle_repr}

In the following, we provide details on the rectangle activation as an exemplary expressive representation to be used in our generalised BVNs (see \Cref{sec:expressive_repr}).

To overcome the limited expressivity of bias-free 2-layer MLPs, we introduce the \emph{rectangle activation}.  
Each input dimension is represented with a binary register, and smaller mask registers define local subdomains.  
The circuit accumulates contributions from each dimension into a shared sigma register, with superposition allowing all inputs and rectangles to be processed simultaneously.

\begin{figure}[h]
  \centering
    \includegraphics[width=.5\textwidth]{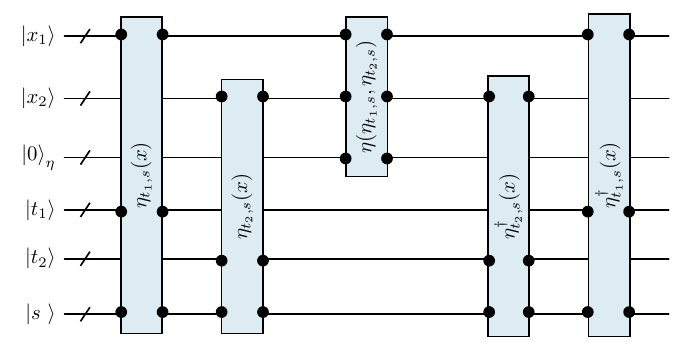}
  \caption{
  Quantum circuit implementing the rectangle activation.  
    We circularly shift each input dimension $x$ by its corresponding rectangle parameter $t$, and activate the output qubit if all high-significance qubits of $x$ are 0, which corresponds to activating the rectangle parametrised by $t$.  
    To reduce artefacts, the superposition register allows for overlapping rectangles by further shifting the input conditionally.
  }
\label{fig:mask_repr}
\end{figure}

In \Cref{fig:mask_repr}, we illustrate the circuit implementing the rectangle activation.  
Since hidden activations are only temporarily needed for computing the output, we use the input register to store them temporarily.  
Let $r_x$ be the qubit resolution of one dimension $x_i$ of the input $x$ and $r_t$ that of the rectangle positions.  
Then, the rectangle width is $2^{r_x - r_t}$, and there are $2^{r_t}$ positions to place the rectangle along the considered dimension $i$.
We circularly shift each input dimension $x_i$ by its corresponding parameter $t_i$:
\begin{equation}
    \ket{x_i} \gets \ket{x_i + t_i \cdot 2^{r_x - r_t} \mod 2^{r_x}}.
\end{equation}
The output qubit is activated (set to 1) if all $(r_x - r_t)$ most significant qubits of the input register are $0$.
After the activation, the shift is undone, leaving the input register back in the $x$-state.  
To reduce artefacts, we allow rectangles to overlap by introducing a superposition register $\ket{s}$, which duplicates rectangle parameters.  
Depending on the value of $s$, we shift the input further before activation.
In our experiments, we use one superposition qubit and shift by half of the rectangle resolution if the superposition bit is 1, yielding
\begin{equation}
    \ket{x_i} \gets \ket{x_i + t_i \cdot 2^{r_x - r_t} + s \cdot 2^{r_t/2} \mod 2^{r_x}}.
\end{equation}

In \Cref{fig:augmented_basis_mask}, we visualise the rectangle activations in 2D. 
The dimensional qubit-resolution of $x$ is $5$, and that of the rectangle positions is $3$.
With only $6$ additional qubits (compared to nine in the MLP case plus ancillaries), the procedure effectively creates localised rectangles in the input space where the activation is turned on, modifying the standard BV basis locally.  
The result is a highly expressive activation pattern that captures fine-grained variations in the input while remaining resource-efficient.

\begin{figure}[ht]
    \centering
    \includegraphics[width=.8\linewidth]{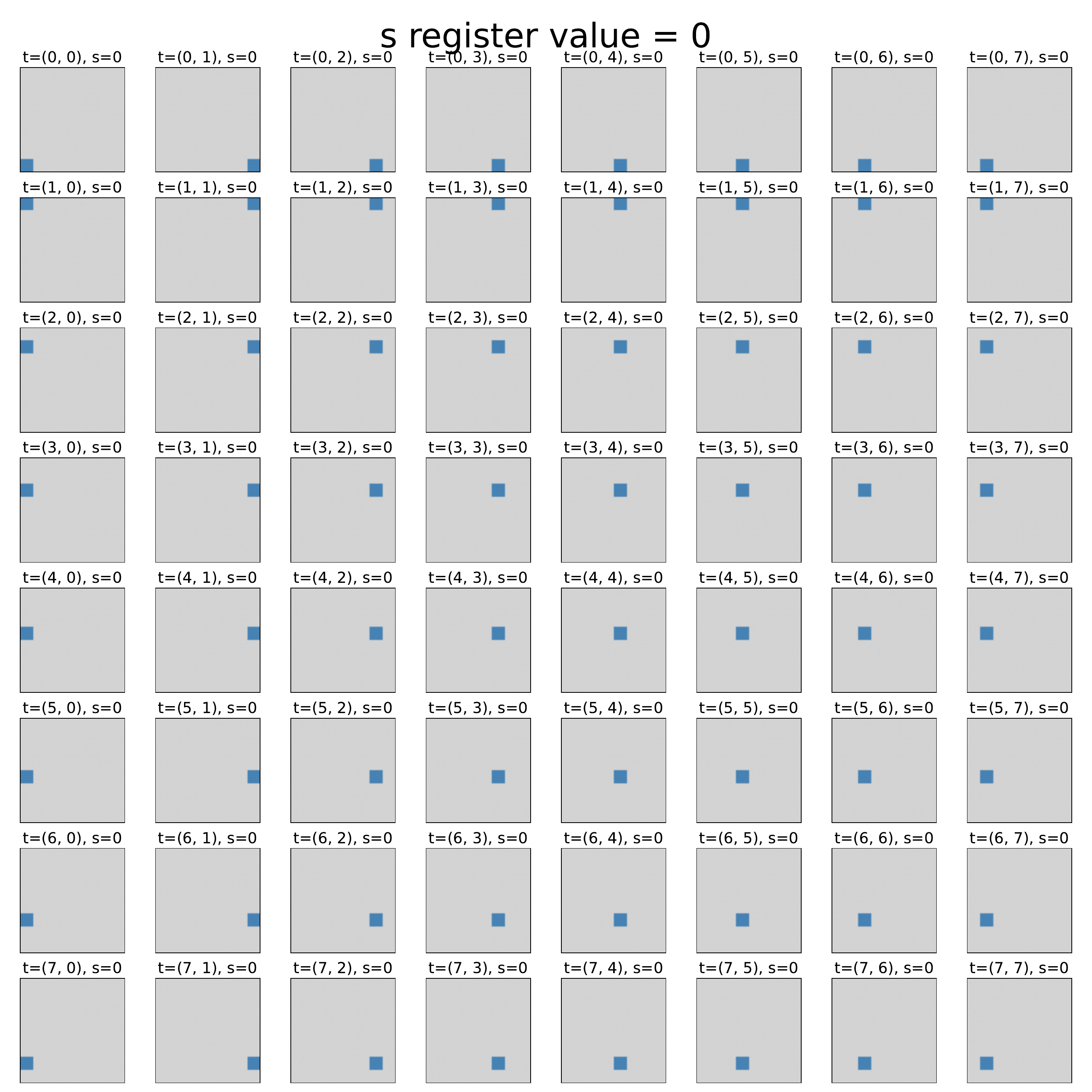}
    \hspace{1cm}
    \includegraphics[width=.8\linewidth]{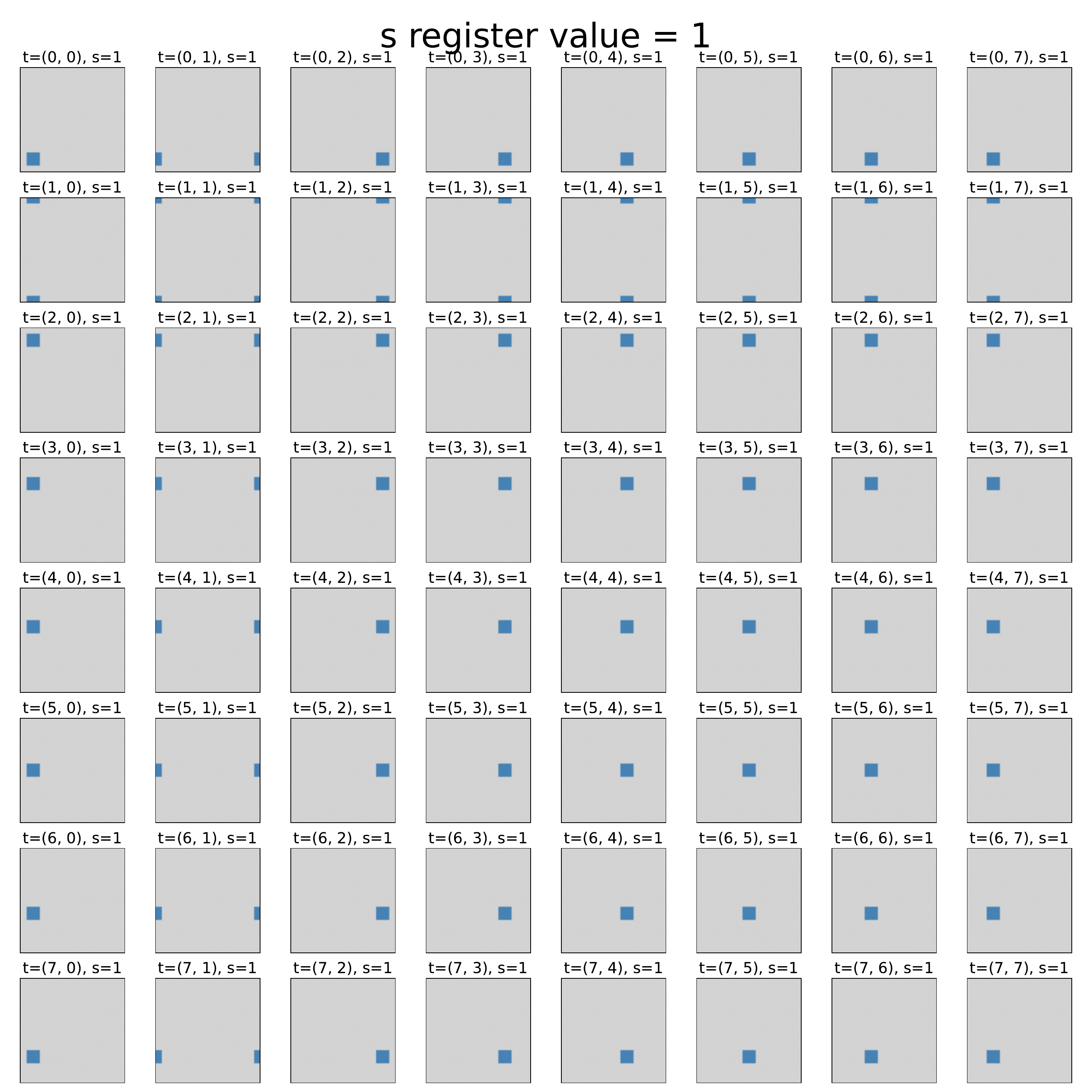}
    \caption{
    All distinct activation patterns produced by the rectangle activation for all $64$ possible weight configurations.  
    Top: activations for $s = 0$.  
    Bottom: activations for $s = 1$.  
    Each weight configuration generates a distinct local activation, illustrating the expressive power of the rectangle activation without requiring biases or additional qubits. 
    Blue indicates $1$ and grey indicates $0$.
    }
    \label{fig:augmented_basis_mask}
\end{figure}

In \cref{fig:shots_img_repr_mask}, we validate the expressivity of the rectangle activation for the resolutions $r_r = 3$ and $r_r = 5$, and ablate the number of shots used in the image fitting experiment.  
We clearly see how each rectangle locally modifies and corrects the standard BV basis functions toward a better fit of the target.  
As a consequence, the generalised model converges faster and captures finer details than the standard model.
The finer the resolution $r_r$, the better the results.

\begin{figure}[t]
    \centering
    \includegraphics[width=1\linewidth, trim={0cm 0cm 0cm .9cm}, clip]{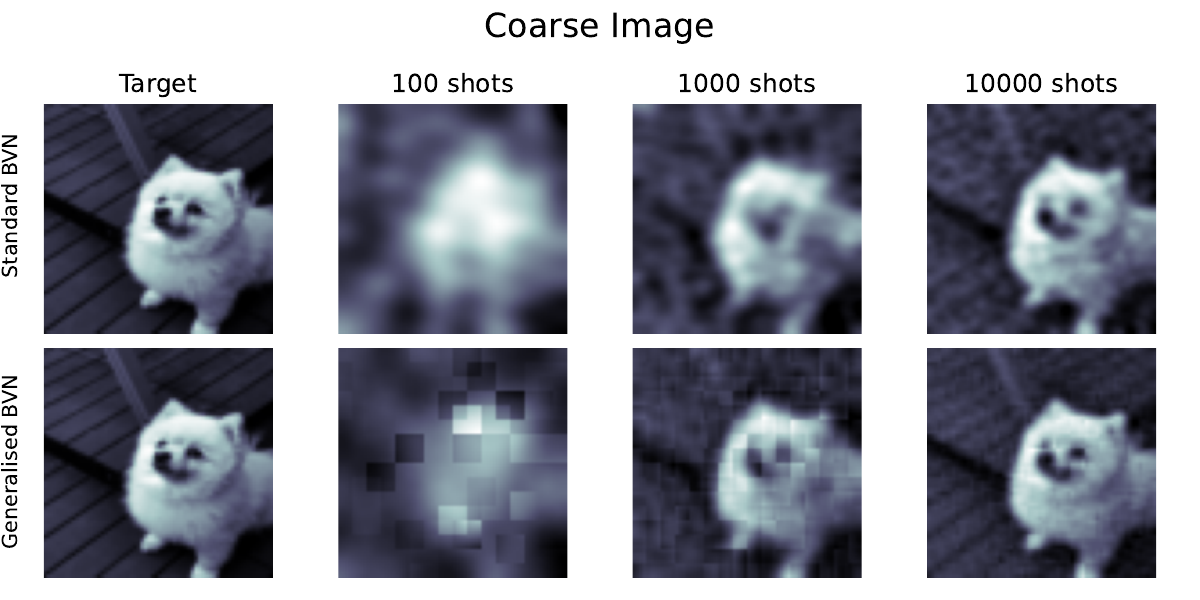}
    \includegraphics[width=1\linewidth, trim={0cm 0cm 0cm .9cm}, clip]{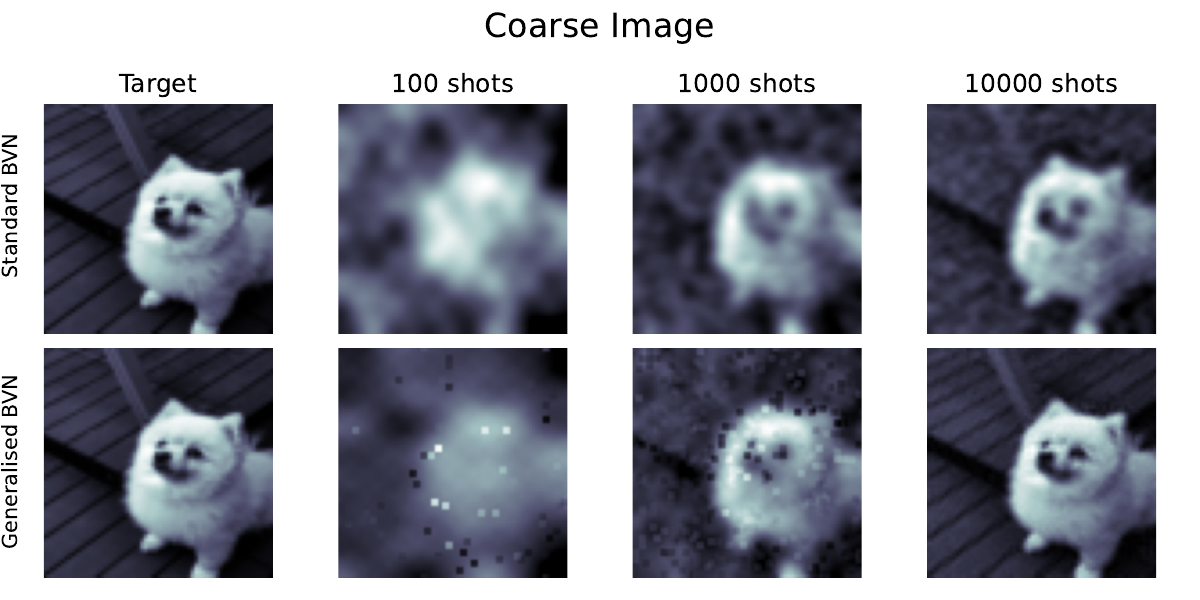}
    \caption{
    Visualising the fitting results of the standard and generalised BVNs for different numbers of shots and two rectangle resolutions, 3x3 (top) and 5x5 (bottom).
    With a few shots, we observe the effect of the rectangle activation functions.  
    Each rectangle locally modifies the standard BV basis, allowing the model to better fit the target in that region.  
    With a sufficient number of shots, the rectangles are no longer visible, and the generalised model becomes more accurate than the standard one.
    The finer rectangle resolution outperforms the coarse.
    }
    \label{fig:shots_img_repr_mask}
\end{figure}

\section{Interference Operators in Comparison}
\label{app:interference_ops}
We visualise the activation or basis function induced by the Hadamard, Fourier and Chebyshev interference operators in \Cref{fig:interference_ops}.
The Hadamard operator produces step parity functions with amplitude $1$, while the Fourier and Chebyshev operators introduce finer variations.

\begin{figure}[ht]
    \centering
    \includegraphics[width=1.\linewidth]{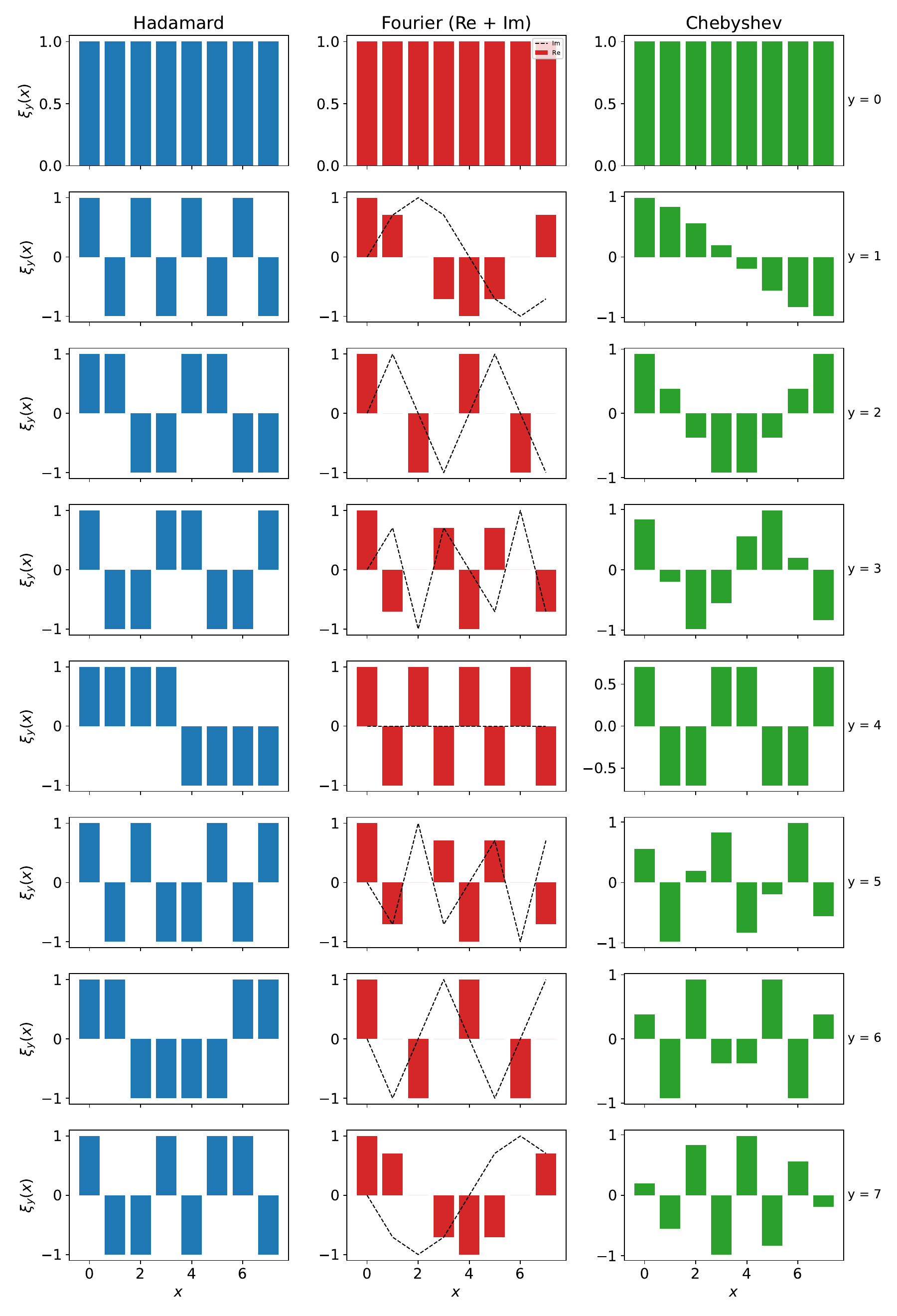}
    \caption{Interference operators in comparison.
    We visualise the basis functions $\xi_y(\cdot)$ for the Hadamard, the Fourier and the Chebyshev operators.
    The Hadamard basis functions are parity activations that yield $\pm1$ depending on whether the weighted sum of the input is odd or even.
    The Fourier and Chebyshev basis functions, in contrast, allow more variations than $\pm1$ and smoother activations. 
    }
    \label{fig:interference_ops}
\end{figure}

We test these interference operators on regression examples of fitting 1D functions with increasing frequencies.
The functions are generated by sampling random vectors of different lengths corresponding to the frequencies and interpolating the vectors.
We also include a step function to analyse the Gibbs phenomenon.
We compare the Hadamard and Chebyshev interference operators, along with the regularisation factor $\lambda$ of the coefficient reconstruction.
We omit the Fourier transform due to the imaginary part it introduces, which is not needed for the real-valued experiment.

\begin{figure*}[!ht]
    \centering
    \includegraphics[width=0.8\linewidth]{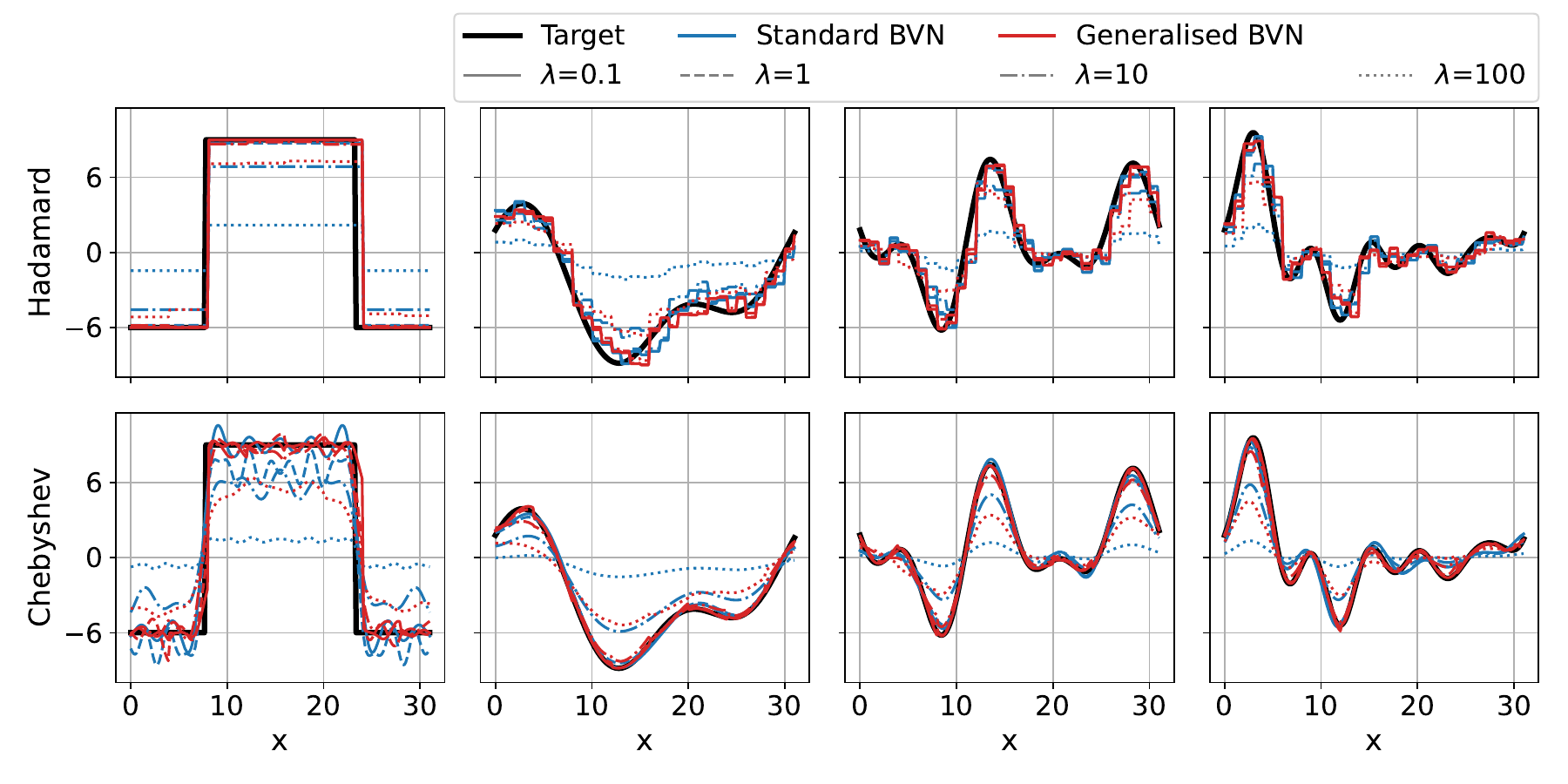}
    \caption{
    Regression of random 1D functions with different frequencies using the proposed interference-based quantum networks.  
    The top and bottom rows show results with the Hadamard and Chebyshev interference operators, respectively.  
    We compare the standard Bernstein--Vazirani model (blue) with the generalised model (red). 
    Within each model, we contrast the Gram inversion regularisation factors $\lambda$.
    The Chebyshev interference produces smooth functions as it acts in the real domain, while Hadamard, due to rounding, produces step functions as it acts in the binary domain.
    The generalised method with $\lambda=0.1$ fits the target function better than that of the standard model (see step function in column 1). 
    Increasing $\lambda$ regularises the approximation.
    }
    \label{fig:1d_regression}
\end{figure*}

\Cref{fig:1d_regression} presents the 1D regression results.
The Hadamard and Chebyshev operators perform similarly overall, as both can capture the function’s variations.
However, the Chebyshev operator, which operates in the real domain, produces smooth outputs, whereas the Hadamard operator yields step‑like functions due to rounding effects inherent to the binary domain.
Both the standard and generalised models fit the target function reasonably well.
Different values of $\lambda$ smooth the approximation, a property we expect to be beneficial in the presence of noise.
A closer inspection shows that the generalised model is more accurate and better captures high‑frequency variations.
For example, with the Chebyshev interference operator applied to the step function (first column, second row), the standard BVN oscillates around the target, whereas the locally corrected basis of the generalised model with rectangles significantly reduces these oscillations and matches the target more closely.
See also the image fitting results in \Cref{fig:shots_img_repr_mask}.

\section{Computational Complexity} 
\label{app:complexities_and_comparision}
Our BVNs expect training inputs (binary strings) to be placed in superposition and labelled via an oracle.
The gate complexity of this encoding depends on the sparsity of the training set relative to the dimension of the Hilbert space spanned by the input resolution; it is $O(mn)$ for $m$ being the effective size of the training set and $n$ the number of qubits (details in \Cref{app:complexities}).
Further, our rectangle representations with qubit resolution $n$ and $n_t$ for the input and parameter registers use $n_t$ controlled adders and their inverses, along with a Toffoli gate to compute and mark the rectangles in the $\eta$-register, yielding a gate complexity of $O(nn_t)$.
The Hadamard interference operator can be implemented with $O(n)$ gates on $n$ qubits, whereas Chebyshev \cite{williams2023quantum} requires $O(n + n^2)$ gates.

Lastly, for reconstructing the coefficients, we build the matrix $X \in \mathbb{R}^{m \times k}$ by evaluating each of the $k$ sampled basis functions on each of the $m$ training inputs. 
We then recover the coefficients in closed form as described in \Cref{eq:coefs}.
Let $c$ denote the cost of evaluating a basis function on a training input, which is $O(n)$ for the Hadamard interference operator and $O(1)$ for the Chebyshev and Fourier interference operators. 
Then the total cost of reconstructing the coefficients is:
\begin{itemize}[leftmargin=*,itemsep=0.1em,topsep=0.1em]
    \item $O(cmk)$ for constructing $X$,
    \item $O(mk^2)$ for computing the Gram matrix $G = X^\top X$,
    \item $O(mk)$ for computing $X^\top F$,
    \item $O(k^3)$ for solving the linear system $(G + \lambda I)^{-1} X^\top F$.
\end{itemize}
In practice, this reconstruction step was not prohibitively expensive. 
Its runtime was comparable to a single training epoch in classical models, as it involves a single pass over the training set.

\subsection{Dataset-Encoding and Oracle Complexities}
\label{app:complexities}
Since data encoding is a major bottleneck in QML, we verify that this step is efficient in our BVNs.
In our experiments, dataset and oracle steps are combined into an amplitude encoding of a state vector with amplitudes proportional to the label values.
As an example, for Iris classification, a $4$-dimensional scaled and rounded feature $x = (7,6,10,11)$ is encoded with $4$ qubits per dimension as $x = (0111, 0110, 1010, 1011)$, giving when concatenated $x = 0111011010101011$. 
Each input is thus a $16$-qubit computational basis vector $\ket{x} \in \mathbb{C}^{2^{16}}$, and the labelled training set is encoded as a normalised superposition $\sum_x f(x) \ket{x}$ using amplitude encoding.  
The gate complexity of amplitude encoding depends on the sparsity of the input vector \cite{schuld2018supervised}.
Sparse-state preparation techniques have a complexity of $O(mn)$ for $m$ nonzero entries \cite{gleinig2021efficient,ramacciotti2024simple}, i.e., the effective size of the training set.
In practice, $m \ll 2^{n}$ ($150$ Iris, $333$ Penguins samples for a $2^{16}$ dimensional space), yielding highly sparse vectors.
Even for our dummy‑filled datasets in the Iris and Penguins experiments, the algorithm in \cite{mozafari2022efficient}, based on Decision Diagrams (DDs) is expected applicable, which complexity $O(kn)$, where $k$ is the number of DD paths.
Our dummy‑filled state is dominated by a constant‑amplitude component over large regions of the computational basis, with only a few data‑dependent deviations, so the effective path count $k$ is expected to scale mainly with $m$.

\emph{The data-loading cost in BVNs compares strictly better than variational QML methods, requiring $O(mn)$ separate state preparation operations per parameter-shift rule evaluation of a single partial derivative, multiplied by the number of shots required for each estimate, multiplied by the number of parameters, and multiplied by the number of training iterations.
}
Concretely, let $c$ denote the number of shots needed to estimate an expectation value, which is required for computing partial derivatives when using the only certified method, the parameter‑shift rule. 
To estimate an expectation value to precision $\epsilon$, it is known that the required number of shots scales as $c = 1/\epsilon^2$. 
For illustration, a single partial derivative with accuracy $\epsilon = 0.001$ already requires $2c = 2 \cdot 10^6$ circuit evaluations.
And we have to multiply this cost by the number of parameters and then by the number of training iterations.

\subsection{Sample Complexity Analysis}
\label{app:sample_complexities}
The number of measurement shots in BVNs is not directly transferable to the approximation accuracy of the target function $f$, and is also not influenced by the dimension of the input space.
Rather, it is determined by the ``sparsity'' of the target function $f$ in the chosen basis $\chi$. 

We can generalise the error bound analysis for the proposed BVN framework.
In fact, since the standard BVN is a special case of the generalised BVN, we can write $f$ uniquely in the standard subspace of the extended basis as
\begin{equation}
	f(x)=\frac{1}{\sqrt{2^n}}\sum_{y\in\Z_2^n}\widehat f(y,0,0)\,\chi_{y00}(x),
\end{equation}
or, more generally, as the non-unique expansion
\begin{equation}
	f(x)=\frac{1}{\sqrt{2^n}}\sum_{a}\widehat f(a)\chi_a(x),
\end{equation}
where $a=(y,z,t)$ indexes the generalised basis functions.

Now let $\Omega$ denote the set of sampled basis states of the generalised BVN, and define
\begin{equation}
	f_\Omega(x)=\frac{1}{\sqrt{2^n}}\sum_{a\in\Omega}\widehat f(a)\chi_a(x).
\end{equation}
Then the approximation error is
\begin{align}
	\epsilon
	&=
	\mathbb E_x\left[(f(x)-f_\Omega(x))^2\right] \\
	&=
	\frac{1}{2^n}
	\sum_x
	\left(
	\frac{1}{\sqrt{2^n}}
	\sum_{a\notin\Omega}
	\widehat f(a)\chi_a(x)
	\right)^2 \\
	&=
	\frac{1}{2^n}
	\sum_x
	\sum_{a,b\notin\Omega}
	\widehat f(a)\widehat f(b)
	\chi_a(x)\chi_b(x) \\
	&=
	\sum_{a,b\notin\Omega}
	\widehat f(a)\widehat f(b)
	\underbrace{
		\left(
		\frac{1}{2^n}
		\sum_x
		\chi_a(x)\chi_b(x)
		\right)
	}_{=:G_{ab}},
\end{align}
where $G$ is the Gram matrix of the sampled basis functions. Hence,
\begin{equation}
	\epsilon
	=
	\widehat F_{\bar\Omega}^{\top}
	G_{\bar\Omega}
	\widehat F_{\bar\Omega},
\end{equation}
where $\bar\Omega$ denotes the complement of $\Omega$. 
Since $G_{\bar\Omega}$ is positive semidefinite,
\begin{equation}
	\epsilon
	\le
	\lambda_{\max}(G_{\bar\Omega})
	\|\widehat f_{\bar\Omega}\|_2^2,
\end{equation}
where $\lambda_{\max}(G_{\bar\Omega})$ denotes the largest eigenvalue of the Gram matrix of the generalised basis functions and $\widehat F_{\bar\Omega} = (\widehat f(a))_{a \notin \Omega}$ is the vector of residual coefficients.

From the above derivation, we can further derive an explicit bound for the standard case. 
Indeed, for the standard BVN basis, the Gram matrix satisfies $G=I$ by orthonormality, and therefore $\epsilon = \|\widehat F_{\bar\Omega}\|_2^2 = 1-s$, where $s=\sum_{a\in\Omega}|\widehat f(a)|^2$ denotes the accumulated spectral mass. 
Thus, in the standard BVN, the approximation error is determined entirely by the unsampled Fourier spectrum.

For the generalised BVN, the induced basis is generally overcomplete and therefore non-orthogonal. 
Consequently, the approximation error satisfies $\epsilon \le \lambda_{\max}(G_{\bar\Omega}) \|\widehat F_{\bar\Omega}\|_2^2$.
This bound exposes a natural trade-off introduced by the expressive representation. 
The more correlated the basis functions are, the higher the penalty factor $\lambda_{\max}(G_{\bar\Omega})$. 
At the same time, the richer representation $\eta_t$ enables the target function to admit more compact representations in the extended basis, so that the norm of the residual coefficients $\|\widehat F_{\bar\Omega}\|_2^2$ may decrease substantially. 

Our empirical results of shots-vs-accuracy in \Cref{fig:random_ablation} show consistently lower approximation errors for the generalised BVNs, suggesting that this reduction in $\|\widehat F_{\bar\Omega}\|_2^2$ dominates the additional Gram-matrix factor.

\section{Inspecting the Sampled Coefficients}\label{app:inspection} 
We inspect the histograms of the coefficients sampled by the standard and generalised BVNs for the classification task on the 2D synthetic datasets \Cref{fig:histograms_2d_true,fig:histograms_2d_false}, as well as on the real-world Iris and Penguins datasets \Cref{fig:histograms_4d}. 

From the histograms of the standard BVN \Cref{fig:histograms_2d_true,fig:histograms_4d}, one can infer how easily the objective function can be approximated on a given dataset.  
Sparse histograms indicate that the target function is sparse in the BV basis, i.e., relatively easy to approximate with a finite number of shots.  
More challenging datasets, such as \texttt{spiral}, exhibit more dispersed histograms.

The generalised model \Cref{fig:histograms_2d_false,fig:histograms_4d} does not produce sparse histograms. 
This is expected, as it extends the standard BV basis and generates multiple copies of the dominant BV functions, each with a local modification.

\begin{figure}[!h]
\centering
    \begin{subfigure}{0.9\linewidth}
        \centering
        \includegraphics[width=\linewidth]{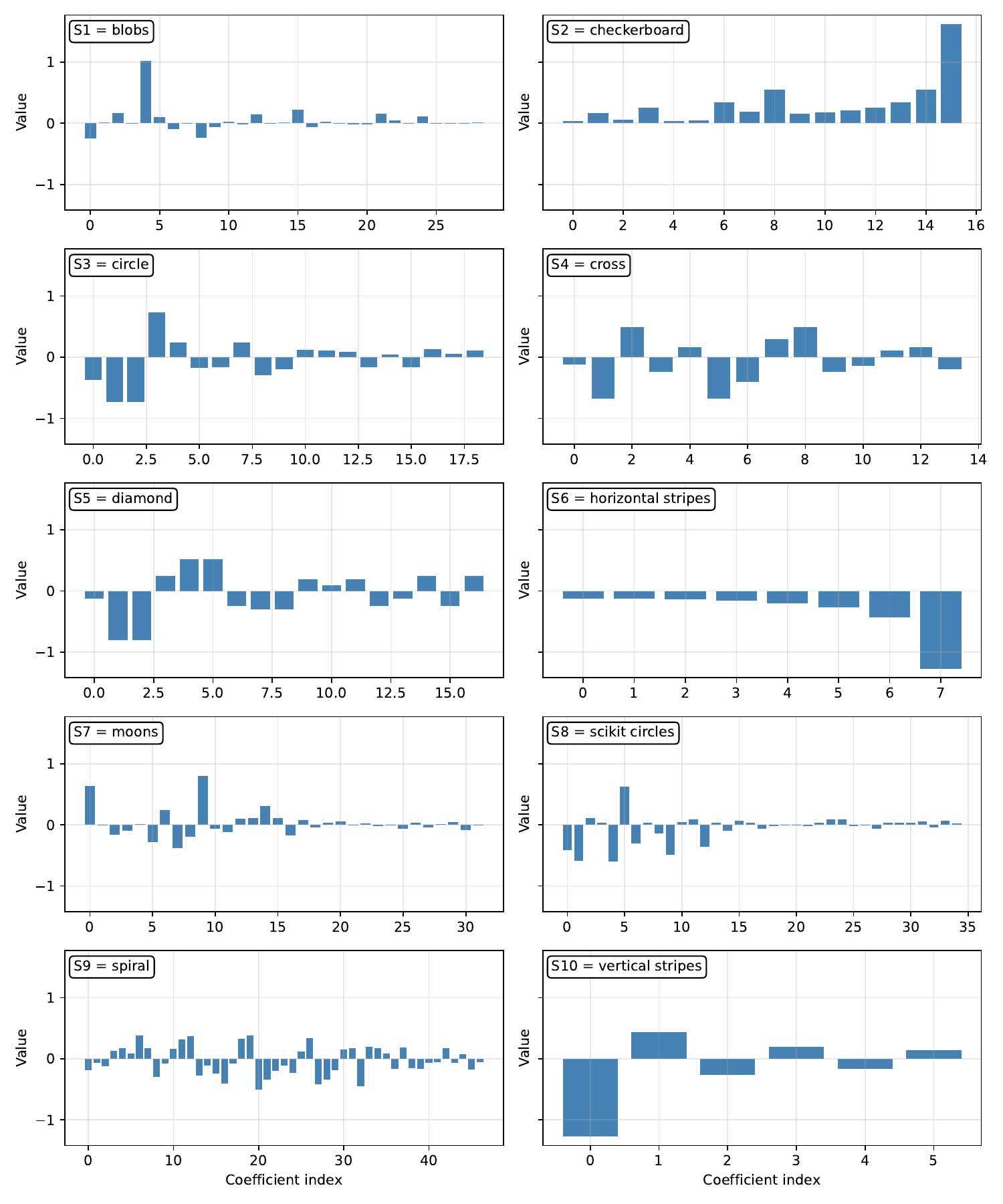}
        \caption{Standard BVN}
        \label{fig:histograms_2d_true}
    \end{subfigure}
    \begin{subfigure}{0.9\linewidth}
        \centering
        \includegraphics[width=\linewidth]{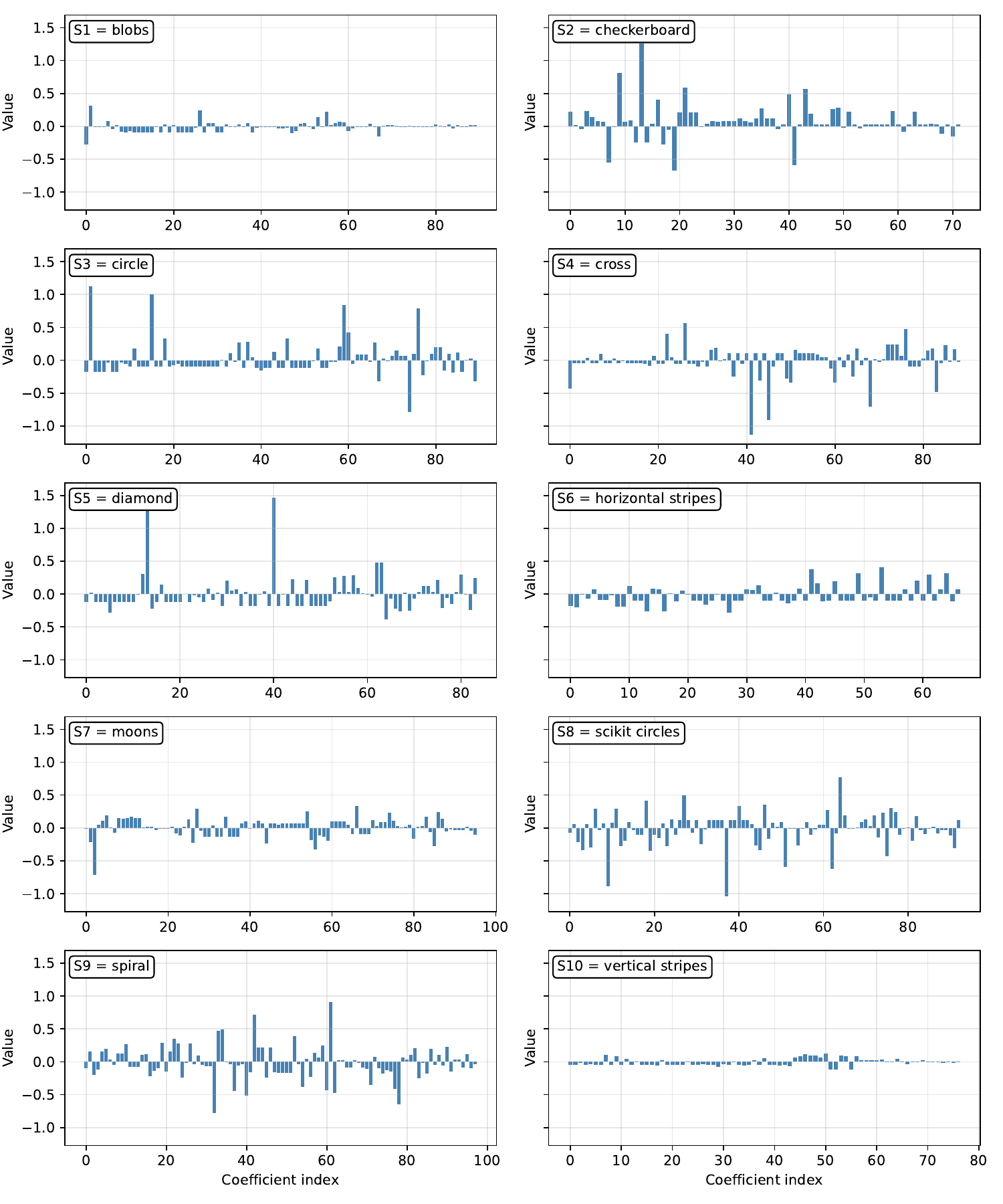}
        \caption{Generalised BVN}
        \label{fig:histograms_2d_false}
    \end{subfigure}

    \caption{Histograms of sampled coefficients for the 2D synthetic datasets using (a) the standard BVN and (b) the generalised BVN.}
    \label{fig:histograms_2d}
\end{figure}

\begin{figure}[!h]
    \centering
    \includegraphics[width=0.9\linewidth]{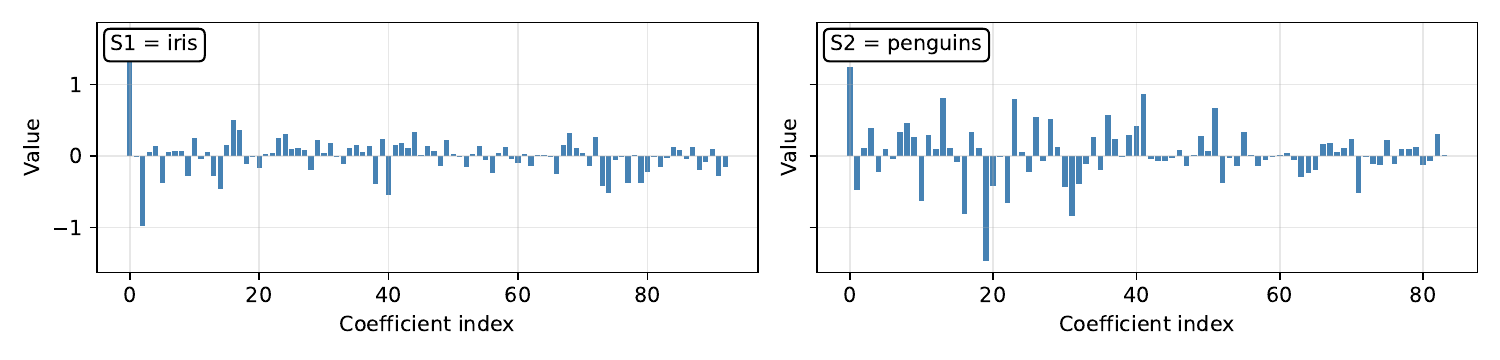}
    \includegraphics[width=0.9\linewidth]{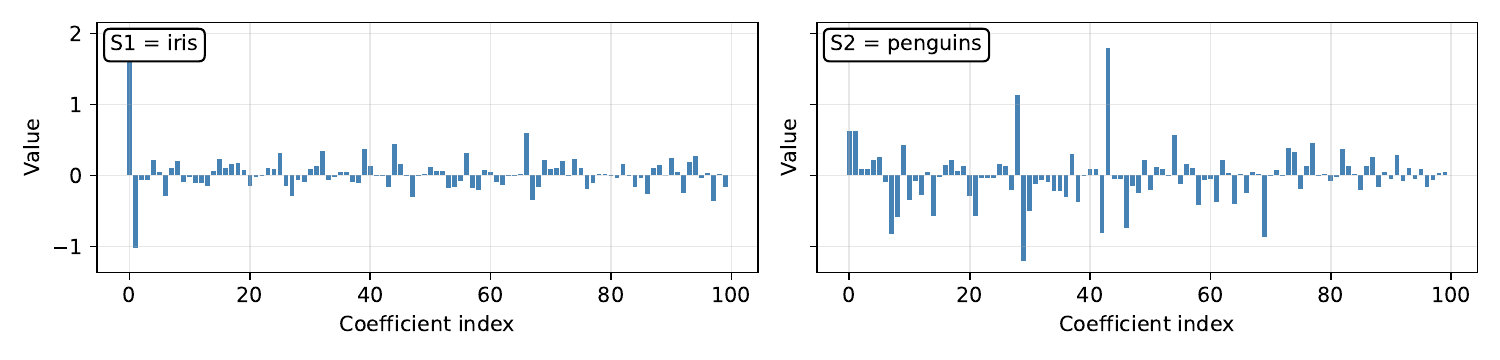}
    \caption{Histograms of the sampled coefficients obtained with the standard (top row) and generalised BV (bottom row) methods on the Iris and Penguins datasets. The coefficients are not concentrated and exhibit dispersed magnitudes.\vspace{-.7cm}}
    \label{fig:histograms_4d}
\end{figure}

\section{Comparing Filled and Unfilled Training Data}
\label{app:filled_comparison} 
As mentioned in the main text (\Cref{sec:2d_classification}), the small number of training samples ($<500$) relative to the large Hilbert-space dimension of $10^{16}$ (4 qubits per input dimension) leads to sampling leakage \cite{wakeham2024inference}, which led us to fill the dataset.
In this section, we complement the filled and unfilled comparison in \Cref{fig:4d_classification} by reporting the performance for additional values of $\lambda$.

In \Cref{fig:4d_classification_filled_comparison}, we experimentally assign a dummy fill‑label to domain inputs not present in the training set.  
This small change substantially improves the generalised model, while the standard model degrades.
The poor performance of the standard model on unfilled data is now due not to leakage but to the dummy label overwhelming the true training samples, causing the model to assign it uniformly as a constant approximation of the target function.  
The generalised model, using the same number of shots, gains enough expressivity to capture the geometry of the true training data.

\begin{figure}[!h]
    \centering

    \begin{subfigure}{0.9\linewidth}
        \centering
        \includegraphics[width=\linewidth]{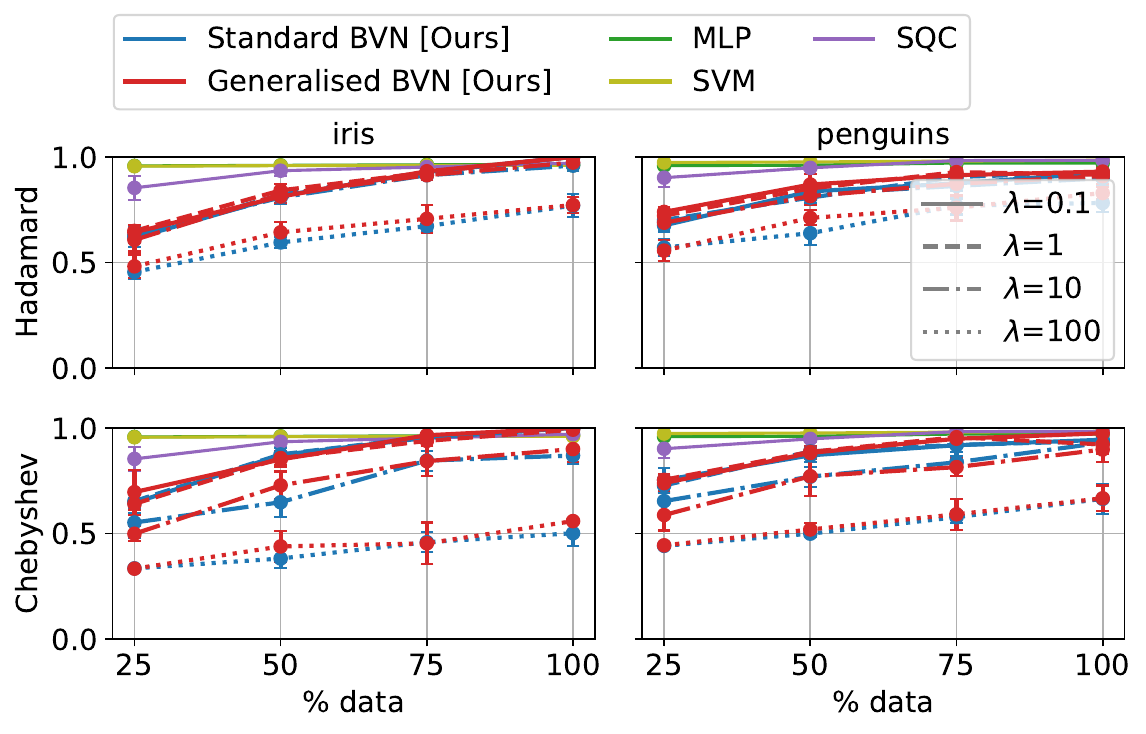}
        \caption{Unfilled training data}
        \label{fig:models_4d_unfilled}
    \end{subfigure}
    \hfill
    \begin{subfigure}{0.9\linewidth}
        \centering
        \includegraphics[width=\linewidth]{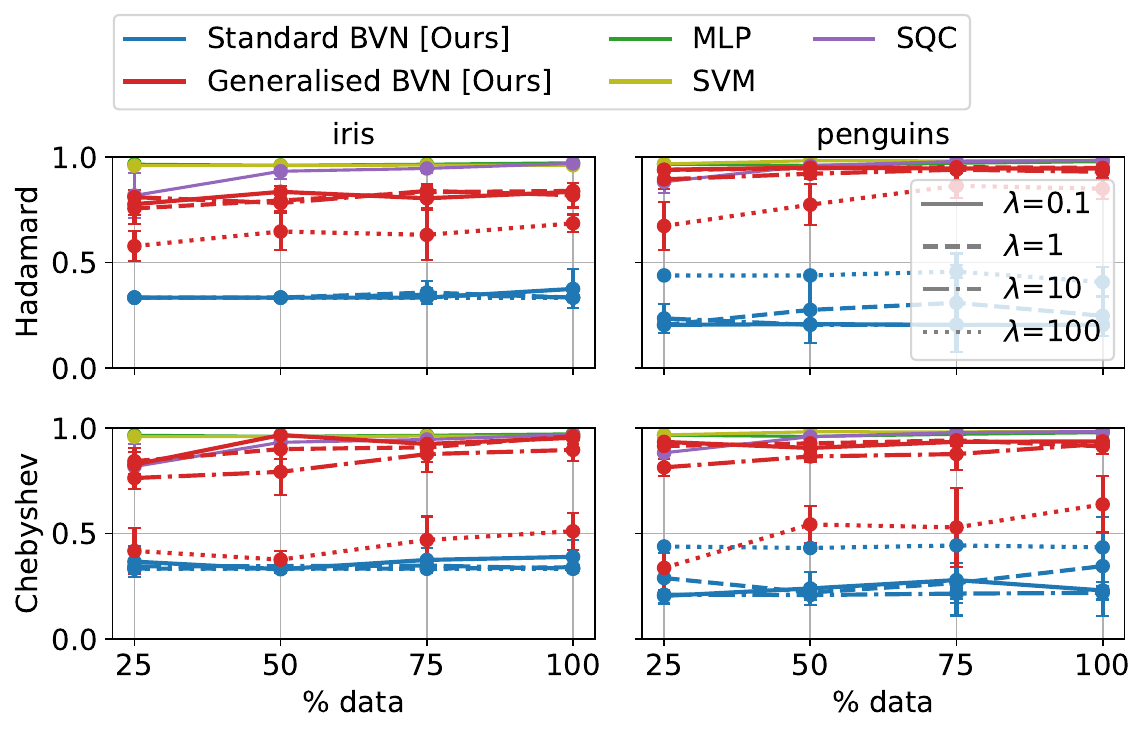}
        \caption{Filled training data}
        \label{fig:models_4d_filled}
    \end{subfigure}

    \caption{Classification results for unfilled (a) and filled (b) training data on the Iris and Penguins datasets for different $\lambda$.
    Filling the data strongly affects the performance of the BVNs:
    Unfilled data lead to poor generalisation, whereas filled data, while degrading the standard model, substantially improve the generalised model, allowing $>90\%$ accuracy with only $25\%$ training data, comparable to the baselines.
    \vspace{-5mm}
    }
    \label{fig:4d_classification_filled_comparison}
\end{figure}

\section{Future Work and Technical Extensions}
\label{app:future_work}

While the Generalised Bernstein--Vazirani model demonstrates strong capabilities on synthetic data, scaling to complex, real-world distributions (e.g., Iris, Penguins) presents challenges related to spectral leakage and shot complexity. 

The primary source of error in current experiments is the misalignment between the discrete, hard-edged quantum basis and the continuous geometry of natural data. 
We propose two enhancements to the expressive representation.

\subsection{Adaptive Basis Alignment}
Standard interference bases are static, determined solely by fixed parameters. To align the basis with the dataset's principal manifolds, we propose a hybrid variational protocol. We introduce a parametrised unitary $V(\bm{\theta})$ applied to the parameter register $|t\rangle$ prior to interference. 

The objective is to maximise the sparsity of the measurement distribution $P(y, z, t|\bm{\theta})$. We define the cost function $\mathcal{L}(\bm{\theta})$ as the Shannon entropy:
\begin{equation}
    \mathcal{L}(\bm{\theta}) = -\sum_{y,z,t} P(y,z,t|\bm{\theta}) \log P(y,z,t|\bm{\theta}).
\end{equation}
A classical pre-training loop minimises $\mathcal{L}(\bm{\theta})$, rotating the basis manifold to maximise constructive interference for the specific target dataset.

\subsection{Apodised Activation Functions}
The current rectangle activation $\chi_{rect}(x)$ acts as a boxcar function, exhibiting high-frequency leakage (Gibbs phenomenon). We propose replacing the binary thresholding with \textbf{apodised activations}. 

By replacing the multi-controlled Toffoli in the activation step with a controlled rotation $CR_y(\phi)$ on the ancilla, we can implement "soft" trapezoidal windows:
\begin{equation}
    U_{activ} |x\rangle |a\rangle = |x\rangle \left( \cos\frac{\phi(x)}{2}\mathbb{I} - \sin\frac{\phi(x)}{2}iY \right) |a\rangle,
\end{equation}
where $\phi(x)$ varies smoothly across the boundary margin $\delta$. This acts as a low-pass filter in feature space, accelerating spectral decay to $O(1/\omega^2)$ and concentrating signal energy.

\end{document}